\edef\normalE{\the\mathcode`E}
\newtheorem{theorem}{Theorem}[section]
\newtheorem{prop}[theorem]{Proposition}
\newtheorem{example}{Example}%
\newtheorem{definition}{Definition}%
\begin{document}

\title{The Polynomial Connection between  Morphological Dilation and Discrete Convolution}
\author{
Vivek Sridhar, Keyvan Shahin, Michael Breu{\ss} and Marc Reichenbach 


\thanks{Vivek Sridhar and Micheal Breu{\ss} are with \textit{Institute for Mathematics}, Brandenburg Technical University, Cottbus, Germany. Email: \{sridhviv, breuss\}@b-tu.de}

\thanks{Keyvan Shahin and Marc Reichenbach are with \textit{Institute for Computer Science}, Brandenburg Technical University, Cottbus, Germany. Email: \{keyvan.shahin,  marc.reichenbach\}@b-tu.de}

\thanks{This work was partially funded by the European Regional Development Fund (EFRE 85037495) and BTU Graduate Research School (STIBET short-term scholarship for international PhD Students sponsored by the German Academic Exchange Service (DAAD) with funds of the German Federal Foreign Office) }

}

\markboth{Journal of \LaTeX\ Class Files,~Vol.~14, No.~8, August~2021}%
{Shell \MakeLowercase{\textit{et al.}}: A Sample Article Using IEEEtran.cls for IEEE Journals}

\IEEEpubid{0000--0000/00\$00.00~\copyright~2021 IEEE}

\maketitle

\begin{abstract}
In this paper we consider the fundamental operations 
dilation and erosion of mathematical morphology. 
Many powerful image filtering operations are based on their combinations. 
We establish homomorphism between max-plus semi-ring of 
integers and subset of polynomials over the field of 
real numbers. This enables to reformulate the task of 
computing morphological dilation to that of 
computing sums and products of polynomials. 
Therefore, dilation and its dual operation erosion
can be computed by convolution of discrete linear signals, 
which is efficiently accomplished using a Fast Fourier Transform technique.
The novel method may deal with non-flat filters and 
incorporates no restrictions on shape or size of the 
structuring element, unlike many other fast methods 
in the field. In contrast to previous fast Fourier 
techniques it gives exact results and is not an approximation. 
The new method is in practice particularly suitable 
for filtering images with small tonal range or when 
employing large filter sizes. We explore the benefits by
investigating an implementation on FPGA hardware.
Several experiments demonstrate the exactness and 
efficiency of the proposed method.
\end{abstract}

\begin{IEEEkeywords}
morphological dilation, morphological erosion, max-plus semi-ring, fast Fourier transform, polynomials, FPGA hardware
\end{IEEEkeywords}

\section{Introduction} \label{Sec:Intro}


\IEEEPARstart{M}{athematical}  morphology is a highly successful field 
in image processing. 
It is concerned with the analysis of shapes and structures 
in images, see for instance \cite{Serra-Soille,Najman-Talbot,Roerdink-2011} for an account of theory and applications. 
The basic building blocks of many of its processes 
are dilation and erosion. 
These operations are dual, so that it is convenient to focus 
on dilation for the construction of algorithms.
Modeling images via grey values on a discrete grid, 
computing dilation means that a pixel value is set to 
the maximum of the grey values within a filter mask 
centered upon it. 
This mask is called structuring element (SE), 
and it can be either flat or non-flat \cite{Haralick_1}. 
A flat SE describes the shape of the mask, while 
a non-flat SE may additionally contain additive 
grey value offsets.

An important property of morphological filters 
is the high efficiency that can be achieved
in their implementation. Let us briefly review
some efficient algorithms along the lines of
their possible classification described 
in~\cite{Droogenbroeck-Buckley-2005}. 
A first family of schemes aims to reduce the size of the SE 
or to decompose it, thus reducing the number of comparison 
operations for evaluation of the maximum respectively 
minimum over an SE. 
In a second family of methods a given image 
is analysed so that redundant operations 
that may arise in some image parts could be reduced. 
However, most of these methods 
are limited in terms of shape, size or flatness of SE, or specific hardware that is needed,
cf.~\cite{deforges-etal-2013,Lin-Xu-2009,herk-1992,Roerdink-2011,Moreaud,Thurley-Danell-2012}.


There are just few fast methods that allow an SE to be 
of arbitrary shape and size. 
A very popular example is the classic scheme from \cite{Droogenbroeck-Talbot-1996} that relies on histogram updates. However, as also for 
\cite{Droogenbroeck-Talbot-1996}, the algorithmic complexity of most methods relies
inherently on size of the SE, and often also on its shape. 
Since the SE is moved over 
an image in implementations relying on sliding window technique, 
the computational effort also relates to image size. \IEEEpubidadjcol 

An alternative construction of fast algorithms
relies on the possibility to formulate operations over an SE as convolutions, 
which may be realized via a fast transform.
In a first work \cite{TMG}, binary dilation 
respectively erosion are represented by convolution of
characteristic functions of underlying sets. 
In \cite{Kukal} this approach was extended in a straightforward way to
grey scale images. This was done by decomposing an image into its level sets,
and each level set was processed like a binary image. By construction,
the method is limited to flat filters of particular shape.
A different extension of \cite{TMG} has been proposed in \cite{VMB_1}, 
making use of an analytical approximation of morphological operations.
The resulting method is suitable for flat and non-flat SE, without restriction on
shape or size. However, as analyzed in \cite{VMB_1,VMB_2}, this comes
at the expense of a shift and smoothing effect 
in the tonal histogram. In order to address this issue, a novel fast and exact method has recently 
been proposed \cite{SB-2022-1} which considers the umbra of  image and filter as the computational setting for the convolution. 

\subsubsection*{Our Contributions} 

In this paper we extend the work in \cite{SB-2022-1} to analytically prove the exactness of the proposed method. In particular, we construct a homomorphism between semi-rings of polynomials and max-plus semi-ring of non-negative integers. This allows us to represent computations in the max-plus semi-ring as sums and products of polynomials. We also establish how our constructions and convolution precisely relate to morphological dilation, thus allowing us to utilize the above-mentioned theory. 
Furthermore we provide extensive additional experiments that demonstrate 
the exactness of the proposed method. In particular, this includes 
a novel, detailed study of an implementation of our method 
on FPGA hardware. The results confirm the beneficial theoretical 
and computational properties of our scheme.

\section{Theoretical Background}  \label{Sec:2}

\subsection{Morphological Operations} \label{Sec:Morph}

An $N$-dimensional grey-value image is a function 
$f:F\rightarrow \mathbb{L}$. 
Thereby $F\subseteq \mathbb{Z}^N$ is the set of the (in general, $N$-dimensional) indices 
of pixels in the image, also known as domain of the image. 
Furthermore, $\mathbb{L}$ $=\{0,1,\cdots l\}$,  $l>0$, is the tonal range of the image. 
In case of the common 8-bit grey-value image, $\mathbb{L}$ is the set of integers in the range $[0,255]$.
Similarly, a morphological grey-value filter, flat or non-flat, can be defined 
as $b:B\rightarrow \mathbb{L}$, $B\subseteq \mathbb{Z}^N$. 
Let us note that $b(\cdot)\geq 0$ (which affects some formula).
The mask domain $B$ denotes the domain of the structuring element. 

The dilation of image $f$ by structuring elements $b$ is denoted by $f\oplus b$ and is computed for each $x$ in its domain $F\oplus B$ as 
\begin{equation}
    \label{Equation:6new}
    (f\oplus b)(x)=
    \max\{f(x-y)+b(y) \, \vert \, (x-y)\in F \land y\in B\} 
\end{equation}
Here, $F\oplus B$ $=\{x_F+ x_B \vert x_F \in F \land x_B\in B \}$. $F\oplus B$ is the Minkowski addition of set of indices. This also corresponds to dilation of $F$ by $B$ considering them as binary images (see, \cite{Haralick_1}). In practice, we are only interested in the indices contained the original image, therefore, for each $x\in F$, we use,
with $f\oplus b :=(f\oplus b)(x) $:
\begin{equation}
    \label{Equation:6}
    \hspace{-2.0ex}f\oplus b=\begin{cases}
    0 \quad \text{ if } x\not\in F\oplus B \vspace{1.0ex}\\
    \max\{f(x-y)+b(y) \, \vert \, (x-y)\in F \land y\in B\} \\
    \qquad \text{otherwise}
    \end{cases}
\end{equation}

\begin{example}
\label{Example:2}
(Adopted from \cite{SB-2022-1}.)
Consider a 1-dimensional image $f=\begin{bmatrix} \underline{3} & 0 & 7 &6 &2 &7\end{bmatrix}$ 
and filter $b=\begin{bmatrix}1 &\underline{2} &X &0\end{bmatrix}$. Here, $\underline{a}$ denotes 
that $a$ is exactly the entry that is located at the index $0$, i.e. at the spatial origin. 
The symbol $X$ is used for indices not in the domain. 
 
From \eqref{Equation:6}, we have  
$(f\oplus b)(0)$ $=\max\{f(0)+b(0),f(1)+b(-1)\}$ $=\max\{3+2, 1+0\}$  $=5$. 
Similarly, $(f\oplus b)(2)$ $=\max\{f(0)+b(2),f(2)+b(0), f(3)+b(-1)\}$ $=\max\{3,9,7 \}$  $=9$ 
and so on. 

Thus, we get, $(f\oplus b)=$ $\begin{bmatrix} \underline{5} &8&9&8&8&9\end{bmatrix}$.
\end{example}

The other fundamental morphological operation is erosion. Erosion of image $f$ by structuring element $b$ is denoted as $f\ominus b$ and is computed for each $x$ in its domain $F\ominus B$ as
\begin{equation}
    \label{Equation:Erosion}
    (f\ominus b)(x)=
    \min\{f(x+y)-b(y) \, \vert \, x+y\in F \land y\in B\} 
\end{equation}

\begin{figure}[h]
\centering
\minipage{0.33\linewidth}
\centering
 \includegraphics[width=\linewidth]{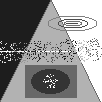}
\endminipage\hfill
\minipage{0.33\linewidth}
 \includegraphics[width=\linewidth]{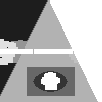}
\endminipage\hfill
\minipage{0.33\linewidth}
 \includegraphics[width=\linewidth]{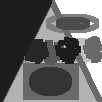}
\endminipage\hfill
\caption{\label{sample_dial_erod} 
{\bf: Left} \textit{Sample image} of size $99\times 99$. {\bf Centre:}  Dilation. {\bf Right:} Erosion, with a $5\times 5$ flat filter. 
}

\end{figure}

Grey-value dilation and erosion are \textit{dual} operations, i.e.\ $f\oplus b = -((-f)\ominus \breve{b})$. Here, $\breve{b}$ is the reflection of structuring element about origin (of $Z^N$) and $(-f)$ denotes the negative of image,  i.e.\ $(-f)(x) = l-f(x)$, $\forall x\in F$ (see, \cite{Viv_1}). Due to the duality, the task of computing grey-value erosion can be reduced to that of computing grey-value dilation in linear time. Thus, in this paper, we focus on computing dilation. 

\subsection{Polynomials and Max-plus semi-ring} \label{Sec:MaxPlus}

The connection between grey-value morphological dilation and the \textit{tropical semi-ring} $(\mathbb{R} _{\max}, \max, +)$,  here  $\mathbb{R}_{\max} = \mathbb{R} \cup \{ -\infty \}$, has been previously explored in \cite{Maragos_2019}. In particular, 
the computation at each pixel of dilation for grey-value images using 
a non-flat structuring element, is of form $\max \{a_1+b_1, a_2+b_2 \cdots a_m+b_m\}$ (see Example \ref{Example:2}). This computation takes place in the semi-ring $(\mathbb{Z}^{+\text{ve}} _{\max}, \max, +)$,  as $\mathbb{L} \subset$  $\mathbb{Z}^{+\text{ve}}_{\max}$ $= \mathbb{N}\cup \{0\} \cup \{-\infty\}$. Note, that the pixel not in the domain of image or SE can be assumed to have the value $-\infty$.

The theory that we develop in this section allows us to reduce the problem of computation in semi-ring $(\mathbb{Z}^{+\text{ve}} _{\max}, \max, +)$ to problem of computing sums and products of polynomials over the real field $(\mathbb{R}, +, \cdot)$.
Our construction in the next section (Section \ref{Sec:Method}), allows us to compute sums and products of polynomials as convolution of real numbers (or integers). Thus, we are able to utilize Fast Fourier Transforms (or Fast Number Theoretic Transforms) to speed up the computations. 

We begin by recalling the definition of the algebraic structure \textit{semi-ring}, which is ring without an additive inverse (see, Chapter 1 of \cite{Golan_2013}). 

\begin{definition}{\textbf{Semi-ring}}
\label{Def:Semiring}
A semi-ring $(R, +, \cdot)$ is a non-empty set $R$, equipped with two binary operators $+$ and $\cdot$ such that they satisfy the following properties:
\begin{enumerate}[i.]
    \item \textbf{Closure of addition:} $a+b\in R$, $\forall a$, $b$ $\in R$.
    \item \textbf{Associativity of addition:} $(a+b)+c=a+(b+c)$, $\forall a,b,c \in R$
    \item \textbf{Existence of additive identity:} $\exists 0\in R$ such that $a+0=0+a=a$, $\forall a\in R$.
    \item \textbf{Commutativity of addition:} $a+b=b+a$, $\forall a,b\in R$.
    \item \textbf{Closure of multiplication:} $a\cdot b \in R$, $\forall a,b\in R$.
    \item \textbf{Associativity of multiplication:} $a (b c)=(a b) c$, $\forall a,b,c \in R$.
    \item \textbf{Existence of multiplicative identity:} $\exists 1\in R$ such that $a 1=1 a=a$, $\forall a\in R$
    \item \textbf{Distributive laws:} Multiplication left and right distributes over addition, i.e.\ for all $a,b,c,\in R$, we have,
    \begin{enumerate}
        \item \textbf{Left distribution} $a(b+c)=ab+ac$.
        \item \textbf{Right distribution} $(a+b)c=a c+b c$.
    \end{enumerate}
    \item $a\cdot 0 =0\cdot a =0$, $\forall a\in R$.
\end{enumerate}
In addition to the above properties, if $a b= b a$, $\forall a,b\in R$, then the semi-ring is said to be \textit{commutative}.
\end{definition}

We now establish that  $(\mathbb{Z}^{+\text{ve}} _{\max}, \max, +)$ is a semi-ring.

\begin{prop}
\label{p:Zsemiring}
$(\mathbb{Z}^{+\text{ve}} _{\max}, \max, +)$ is a commutative semi-ring.
\end{prop}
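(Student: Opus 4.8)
The plan is to verify, one by one, the nine axioms of Definition~\ref{Def:Semiring} together with the commutativity clause, under the identifications: the semi-ring addition $+$ is $\max$, and the semi-ring multiplication $\cdot$ is ordinary integer addition extended by the absorbing convention $a + (-\infty) = (-\infty) + a = -\infty$ for all $a \in \mathbb{Z}^{+\text{ve}}_{\max}$. First I would dispatch the axioms that are immediate from standard properties of $\max$ and of addition on $\mathbb{Z}$: associativity and commutativity of $\max$ give axioms (ii) and (iv); associativity of integer addition gives axiom (vi), and commutativity of integer addition gives the final commutativity clause — in both cases one only needs to note that the identities survive the $-\infty$ convention, since if any argument equals $-\infty$ then both sides collapse to $-\infty$. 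The additive identity of axiom (iii) is $-\infty$, because $\max(a,-\infty) = a$ for every $a$; the multiplicative identity of axiom (vii) is $0 \in \mathbb{Z}^{+\text{ve}}_{\max}$, because $a + 0 = a$.

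Next I would treat the two closure axioms (i) and (v). Closure under $\max$ is clear. The point worth making for closure under $+$ is that $\mathbb{Z}^{+\text{ve}}_{\max}$ contains no negative integers and is not equipped with a subtraction, so the sum of two elements of $\mathbb{N} \cup \{0\}$ again lies in $\mathbb{N} \cup \{0\}$, and any sum involving $-\infty$ is $-\infty$; hence the result stays in the set. Axiom (ix), which in the notation of Definition~\ref{Def:Semiring} reads $a \cdot 0 = 0 \cdot a = 0$ with $0$ the additive identity, here becomes $a + (-\infty) = (-\infty) + a = -\infty$, i.e.\ exactly the defining convention for the extended addition, so there is nothing further to check.

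The only axiom needing a genuine argument is distributivity (axiom (viii)). I would prove the left law $a + \max(b,c) = \max(a+b,\, a+c)$ for all $a,b,c \in \mathbb{Z}^{+\text{ve}}_{\max}$, after which the right law follows from the commutativity of $+$ already established. If $a = -\infty$, both sides equal $-\infty$; if $a$ is finite, then $t \mapsto a + t$ is monotone increasing for the order $-\infty < 0 < 1 < 2 < \cdots$ on $\mathbb{Z}^{+\text{ve}}_{\max}$, and a monotone map on a totally ordered set commutes with the maximum of a finite set, which yields the identity. This monotonicity step is the crux of the proof and the only place where the specific max-plus structure is used; everything else is bookkeeping around the absorbing element $-\infty$ and appeal to arithmetic in $\mathbb{Z}$. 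Assembling these verifications shows that $(\mathbb{Z}^{+\text{ve}}_{\max}, \max, +)$ meets every requirement of Definition~\ref{Def:Semiring} and is moreover commutative, which is the assertion of the proposition.
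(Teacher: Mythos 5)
Your proof is correct, but it follows a different route than the paper. The paper does not verify the axioms of Definition~\ref{Def:Semiring} directly: it starts from the known fact that the tropical semi-ring $(\mathbb{R}_{\max},\max,+)$ is a commutative semi-ring (citing \cite{Maragos_2019}) and then shows that $\mathbb{Z}^{+\text{ve}}_{\max}$ is a sub-semiring, which by the criterion in \cite{Golan_2013} only requires checking closure of both operations and membership of the two identities ($-\infty$ and $0$); associativity, distributivity and commutativity are inherited from the ambient structure. You instead give a self-contained axiom-by-axiom verification, in which the only nontrivial step is distributivity, handled via the monotonicity of $t\mapsto a+t$ on the totally ordered set $-\infty<0<1<2<\cdots$ (together with the absorbing case $a=-\infty$). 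The trade-off is clear: the paper's argument is shorter and delegates the genuinely algebraic content to the cited result on $(\mathbb{R}_{\max},\max,+)$, whereas yours is more elementary and does not rely on any external semi-ring fact or on the sub-semiring criterion, at the cost of some bookkeeping around $-\infty$. Both establish the proposition; your treatment of distributivity is the only place where a real argument is needed, and it is carried out correctly, including the degenerate cases involving $-\infty$.
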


\begin{proof}
We know that $(\mathbb{R}_{\max}, \max, +)$ forms a semi-ring known as \textit{tropical semi-ring} \cite{Maragos_2019}.

Clearly, $\mathbb{Z}^{+\text{ve}}_{\max}$ $= \mathbb{N}\cup \{0\} \cup \{-\infty\}$ $\subset \mathbb{R}_{\max}$ $= \mathbb{R}\cup \{-\infty\}$. To establish that $\mathbb{Z}^{+\text{ve}}_{\max}$ is a semi-ring (i.e.\ sub-semiring of $(\mathbb{R}_{\max}, \max, +)$), we show the closure of operators and existence of identities in $\mathbb{Z}^{+\text{ve}}_{\max}$ (see, Chapter 1 of \cite{Golan_2013}).

\begin{enumerate}[i.]
    \item By definition of $\mathbb{Z}^{+\text{ve}}_{\max}$, the identity of operator '$+$', is $0\in \mathbb{Z}^{+\text{ve}}_{\max}$ and the identity of operator '$\max$', is $-\infty \in \mathbb{Z}^{+\text{ve}}_{\max}$. 
    \item We prove the closure of operators. 
    
    As for the whole numbers, $\mathbb{N}\cup\{0\}$ is closed under operator '$+$'. Therefore, for any $a,b\in \mathbb{N}\cup\{0\}$, $a+b\in \mathbb{N}\cup\{0\} \subset \mathbb{Z}^{+\text{ve}}_{\max}$. 
    
    If $a= -\infty$ or $b=-\infty$, then $a+b=-\infty\in \mathbb{Z}^{+\text{ve}}_{\max}$.
    Thus, $\mathbb{Z}^{+\text{ve}}_{\max}$ is closed under '$+$'.
    
    Similarly, for any $a,b\in \mathbb{N}\cup\{0\}$, $\max \{a,b\} \in \mathbb{N}\cup\{0\}$.  
    If $a=-\infty$, then $\max \{a,b\} =b$ and if $b=-\infty$, then $\max \{a,b\} =a$. Thus, $\mathbb{Z}^{+\text{ve}}_{\max}$ is closed under '$\max$'.

\end{enumerate}

Since $(\mathbb{R}_{\max}, \max, +)$ is commutative semi-ring, $(\mathbb{Z}^{+\text{ve}} _{\max}, \max, +)$ is also commutative.
\end{proof}

Consider the ring of polynomials, $(\mathbb{R}[x], +, \cdot)$, over the field of real numbers, $(\mathbb{R}, +, \cdot)$.  $\mathbb{R}[x]$ consists of all polynomials, in a single variable $x\in\mathbb{R}$, with real coefficients and non-negative integers as powers. Let $\mathbb{P}\subset \mathbb{R}[x]$, be the set of polynomials with non-negative real coefficients, e.g.\ $0$, $3x^2+2$, $x$, $1.5$, $7.3x+9$ etc.\ 

\begin{prop}
\label{p:Psemiring}
$(\mathbb{P}, +, \cdot)$ is a commutative semi-ring.
\end{prop}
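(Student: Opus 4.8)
The plan is to mirror the proof of Proposition~\ref{p:Zsemiring}: since $(\mathbb{R}[x], +, \cdot)$ is a commutative ring, it is in particular a commutative semi-ring, so all the axioms of Definition~\ref{Def:Semiring} (associativity and commutativity of both operations, distributivity, the annihilation property $a\cdot 0 = 0$, and so on) hold in $\mathbb{R}[x]$ and are automatically inherited by any subset that is closed under $+$ and $\cdot$ and contains both identities. Hence it suffices to verify, for $\mathbb{P}$, closure under $+$, closure under $\cdot$, and that the additive identity $0$ and the multiplicative identity $1$ lie in $\mathbb{P}$.

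First I would record the elementary fact underlying everything: the set $\mathbb{R}_{\geq 0}$ of non-negative reals is closed under both addition and multiplication. For closure of $\mathbb{P}$ under addition, write $p = \sum_i a_i x^i$ and $q = \sum_i c_i x^i$ with all $a_i, c_i \geq 0$; then $p+q = \sum_i (a_i + c_i)x^i$ with each $a_i + c_i \geq 0$, so $p+q\in\mathbb{P}$. For closure under multiplication, invoke the coefficientwise (Cauchy) product formula: the coefficient of $x^k$ in $pq$ equals $\sum_{i+j=k} a_i c_j$, a finite sum of products of non-negative reals, hence non-negative, so $pq\in\mathbb{P}$. Finally, the zero polynomial (all coefficients equal to $0\geq 0$) lies in $\mathbb{P}$ and serves as the additive identity, the constant polynomial $1$ (its single coefficient is $1\geq 0$) lies in $\mathbb{P}$ and serves as the multiplicative identity, and commutativity of $\cdot$ is inherited from $\mathbb{R}[x]$. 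This exhibits $(\mathbb{P}, +, \cdot)$ as a sub-semiring of $(\mathbb{R}[x], +, \cdot)$, and in particular a commutative semi-ring.

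I do not expect a genuine obstacle here; the only point deserving a little care is phrasing the closure-under-multiplication step through the explicit product-of-coefficients formula, so that preservation of non-negativity is transparent rather than merely asserted. One could alternatively observe that $\mathbb{P}$ is the monoid semi-ring $\mathbb{R}_{\geq 0}[\mathbb{N}\cup\{0\}]$ and appeal to a general closure result, but the direct verification is shorter and self-contained, and it matches the style already used for Proposition~\ref{p:Zsemiring}.
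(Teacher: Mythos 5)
Your proposal is correct and follows essentially the same route as the paper: both exhibit $(\mathbb{P},+,\cdot)$ as a sub-semiring of $(\mathbb{R}[x],+,\cdot)$ by checking that $0,1\in\mathbb{P}$ and that $\mathbb{P}$ is closed under $+$ and $\cdot$ via non-negativity of the coefficients, inheriting all remaining axioms and commutativity from $\mathbb{R}[x]$. Your explicit use of the Cauchy product formula simply makes a step the paper asserts more transparent; there is no substantive difference.
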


\begin{proof}
$(\mathbb{R}[x], +, \cdot)$ is a commutative ring \cite{Herstein_1975}. To prove the proposition, we show that $(\mathbb{P},+,\cdot)$ is a sub-semiring of $(\mathbb{R}[x], +, \cdot)$. We need the closure of operators and existence of identities in $\mathbb{P}$ (see, Chapter 1 of \cite{Golan_2013}).

\begin{enumerate}[i.]
    \item The additive and multiplicative identity of $(\mathbb{R}[x], +, \cdot)$ are $0$ and $1$, respectively. By definition of $\mathbb{P}$, $0\in \mathbb{P}$ and $1\in \mathbb{P}$.
    \item The set of non-negative real numbers, $\{x\vert x\in \mathbb{R}, x\geq 0\}$ is closed under addition and multiplication. The set of non-negative integers, $\mathbb{N}\cup\{0\}$ is closed under addition. Thus, for any $p_1, p_2 \in P$, $p_1 +p_2 \in \mathbb{P}$, as coefficients of $p_1+p_2$ are non-negative real numbers. Similarly, $p_1\cdot p_2 \in \mathbb{P}$, as coefficients of $p_1\cdot p_2$ are non-negative and powers are non-negative integers. 
\end{enumerate}
Since, $(\mathbb{R}[x], +, \cdot)$ is commutative, $(\mathbb{P}, +, \cdot)$ is a commutative semi-ring.
\end{proof}

Let $\delta : \mathbb{R}[x] \rightarrow \mathbb{Z}^{+\text{ve}} _{\max}$ be the degree function, i.e.
\begin{equation}
    \label{Eq:Degree}
    \delta (p) = \begin{cases}
    -\infty &\text{if } p=0\\
    \text{highest power with} \\ \text{non-zero coefficient}
   &\text{otherwise.}
    \end{cases}
\end{equation}
We know that (see, Chapter 3 of \cite{Herstein_1975}):
\begin{enumerate}[i.]
    \item $\delta(p_1 \cdot p_2) =\delta(p_1) +\delta(p_2)$, $\forall p_1, p_2 \in \mathbb{R}[x]$.
    \item $\delta(p_1+p_2) \leq \max \{\delta(p_1), \delta(p_2)\}$, $\forall p_1, p_2 \in \mathbb{R}[x]$.
\end{enumerate}

We now show that $\delta(.)$ is a homomorphism from $\mathbb{P}$ onto $\mathbb{Z}^{+\text{ve}} _{\max}$.

\begin{prop}
\label{p:DegHomomorphism}
$\delta(.) : \mathbb{P} \rightarrow \mathbb{Z}^{+\text{ve}} _{\max}$ is a homomorphism from semi-ring $(\mathbb{P}, +, \cdot)$ onto semi-ring $(\mathbb{Z}^{+\text{ve}} _{\max}, \max, +)$.
\end{prop}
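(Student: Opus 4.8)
The plan is to verify the four defining properties of a semi-ring homomorphism --- preservation of the two binary operations and of the two identities --- and then to check surjectivity. One has to keep track of the pairing of operations: the multiplication of $\mathbb{P}$ corresponds to the operation ``$+$'' of $\mathbb{Z}^{+\text{ve}}_{\max}$, while the addition of $\mathbb{P}$ corresponds to ``$\max$''. Thus I must show $\delta(p_1 \cdot p_2) = \delta(p_1) + \delta(p_2)$, $\delta(p_1 + p_2) = \max\{\delta(p_1), \delta(p_2)\}$, $\delta(0) = -\infty$, and $\delta(1) = 0$.

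Three of these are essentially free. The multiplicative relation $\delta(p_1 \cdot p_2) = \delta(p_1) + \delta(p_2)$ is the first identity recalled just above the statement, valid on all of $\mathbb{R}[x]$ and hence on $\mathbb{P} \subset \mathbb{R}[x]$; the degenerate cases $p_1 = 0$ or $p_2 = 0$ are consistent because both sides then equal $-\infty$, using that $-\infty$ is absorbing for ``$+$'' in $\mathbb{Z}^{+\text{ve}}_{\max}$ (property ix.\ of Definition \ref{Def:Semiring}). The two identities are immediate from the definition \eqref{Eq:Degree}: $\delta(0) = -\infty$, the $\max$-identity, and $\delta(1) = 0$, the $+$-identity. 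For surjectivity, given $n \in \mathbb{N} \cup \{0\}$ the monomial $x^n$ belongs to $\mathbb{P}$ and satisfies $\delta(x^n) = n$, and $\delta(0) = -\infty$; hence $\delta$ maps $\mathbb{P}$ onto $\mathbb{Z}^{+\text{ve}}_{\max}$.

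The only part needing genuine argument --- and the point where membership in $\mathbb{P}$, i.e.\ non-negativity of the coefficients, is indispensable --- is the additive relation $\delta(p_1 + p_2) = \max\{\delta(p_1), \delta(p_2)\}$. The inequality ``$\leq$'' is the second fact recalled above and holds for all polynomials, so it remains to prove ``$\geq$'', which I would do by cases. If some summand is $0$, say $p_2 = 0$, then $p_1 + p_2 = p_1$ and equality is clear (both sides $-\infty$ if $p_1$ is also $0$). If $\delta(p_1) > \delta(p_2)$, the coefficient of $x^{\delta(p_1)}$ in $p_1 + p_2$ is the nonzero leading coefficient of $p_1$, so $\delta(p_1 + p_2) = \delta(p_1) = \max\{\delta(p_1), \delta(p_2)\}$; symmetrically if $\delta(p_2) > \delta(p_1)$. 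If $\delta(p_1) = \delta(p_2) = d$ with $d$ finite, the leading coefficients $c_1$ and $c_2$ are \emph{strictly positive} since $p_1, p_2 \in \mathbb{P}$, so the coefficient of $x^d$ in $p_1 + p_2$ equals $c_1 + c_2 > 0$ and no cancellation of the top term can occur; hence $\delta(p_1 + p_2) = d$. In all cases $\delta(p_1 + p_2) \geq \max\{\delta(p_1), \delta(p_2)\}$, and combined with ``$\leq$'' this gives the required equality.

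I expect no serious difficulty here; the only care needed is in the case split above and in handling the $-\infty$ conventions. The conceptual content is precisely that restricting to non-negative coefficients eliminates leading-term cancellation, which is what turns the general degree inequality into an equality and makes $\delta$ a bona fide surjective semi-ring homomorphism rather than merely a monotone, multiplication-respecting map.
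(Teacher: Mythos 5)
Your proposal is correct and follows essentially the same route as the paper: check the identities, use the standard degree formula for products, prove surjectivity via monomials, and upgrade the degree inequality for sums to an equality by observing that non-negative coefficients rule out cancellation of the leading term. Your explicit three-way case split (zero summand, distinct degrees, equal degrees) is just a slightly more detailed version of the paper's ``WLOG $a\geq b$'' argument.
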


\begin{proof}

We first show that $\delta(.) : \mathbb{P} \rightarrow \mathbb{Z}^{+\text{ve}} _{\max}$ is a surjection.

For $-\infty \in \mathbb{Z}^{+\text{ve}} _{\max}$ , we have $0\in \mathbb{P}$ such that $\delta(0)=-\infty$. 
Similarly, for $0\in  \mathbb{Z}^{+\text{ve}} _{\max}$, we have $1\in \mathbb{P}$, such that $\delta (1)=0$. For any $a\in \mathbb{N}$, we have $x^a\in \mathbb{P}$, such that $\delta(x^a)=a$.

For $\delta(.)$ to be a homomorphism between the semi-rings, it needs to map the identities and preserve the operations (see, Chapter 9 of \cite{Golan_2013}).

\begin{enumerate}[i.]
    \item We have $\delta(0)=-\infty$ and $\delta(1)=0$. Therefore, $\delta(.)$ maps the additive and multiplicative identities of $(\mathbb{P}, +, \cdot)$ to the corresponding identities of $(\mathbb{Z}^{+\text{ve}} _{\max}, \max, +)$.
    \item We already have established $\delta(p_1\cdot p_2) =\delta(p_1)+\delta(p_2)$, $\forall p_1, p_2 \in \mathbb{P}$.
    
    We also know $\delta(p_1+p_2) \leq \max \{\delta(p_1)+\delta(p_2)\}$, $\forall p_1, p_2 \in \mathbb{P}$. We prove equality. 
    If $p_1 =0$, then $\delta(p_1+p_2)=\delta(p_2)$ $=\max\{\delta(p_1),\delta(p_2)\}$, as $\delta(p_1)=-\infty$. The case $p_2 =0$ works analogously.
    
    Let $p_1 \neq 0$ and $p_2 \neq 0$. Then, $\delta(p_1)=a$ and $\delta(p_2)=b$, for some $a,b\in \mathbb{N}\cup \{0\}$. Since we are dealing with commutative semi-rings, without loss of generality, we can assume $a\geq b$.
    Since coefficients of $p_1$ and $p_2$ are non-negative, coefficient of $x^a$ is non-zero in $p_1+p_2$. Therefore, $\delta(p_1+p_2)\geq a=\max\{\delta(p_1),\delta(p_2)\}$. 
    This means $\delta(p_1+p_2)=\max\{\delta(p_1),\delta(p_2)\}$
    
    $\therefore \delta(p_1+p_2)=\max\{\delta(p_1),\delta(p_2)\}$, $\forall p_1, p_2\in \mathbb{P}$. \qedhere

\end{enumerate}


\end{proof}

We define an injection, $\delta '(.):\mathbb{Z}^{+\text{ve}} _{\max } \rightarrow \mathbb{P}$, which computes an inverse of $\delta(.):\mathbb{P}\rightarrow \mathbb{Z}^{+\text{ve}} _{\max } $

\begin{equation}
    \label{eq:InvDelta}
    \delta '(a) =\begin{cases}
        0 &\text{if } a=-\infty\\
        1 &\text{if } a=0\\
        x^a &\text{, } \forall a\in \mathbb{N}
    \end{cases}
\end{equation}

Clearly, one can observe that:
\begin{equation}
 \label{eq:deltaInvDelta}
   \delta(\delta '(a))=a \text{, }\forall a\in \mathbb{Z}^{+\text{ve}} _{\max }. 
\end{equation}


\begin{prop}
\label{p:InvDelta}
 $\delta '(.):\mathbb{Z}^{+\text{ve}} _{\max } \rightarrow \mathbb{P}$ is an injection.
\end{prop}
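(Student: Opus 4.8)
The plan is to use the identity \eqref{eq:deltaInvDelta}, which already records that $\delta(\delta'(a)) = a$ for every $a \in \mathbb{Z}^{+\text{ve}}_{\max}$; that is, $\delta(.)$ is a left inverse of $\delta'(.)$. The standard set-theoretic fact that a map possessing a left inverse is injective then finishes the argument: if $\delta'(a_1) = \delta'(a_2)$, apply $\delta(.)$ to both sides and invoke \eqref{eq:deltaInvDelta} to get $a_1 = \delta(\delta'(a_1)) = \delta(\delta'(a_2)) = a_2$. So the entire proof reduces to one line, provided we are willing to cite \eqref{eq:deltaInvDelta}.

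If one prefers a self-contained argument not relying on \eqref{eq:deltaInvDelta}, I would instead inspect the defining cases in \eqref{eq:InvDelta} directly. The three branches of the definition partition the domain $\mathbb{Z}^{+\text{ve}}_{\max} = \{-\infty\} \cup \{0\} \cup \mathbb{N}$, sending $-\infty \mapsto 0$, $0 \mapsto 1$, and $a \mapsto x^a$ for $a \in \mathbb{N}$. It then suffices to verify that the images are pairwise distinct: the zero polynomial is distinct from $1$ and from every $x^a$; the constant $1$ is distinct from $x^a$ for $a \geq 1$ (e.g.\ by comparing degrees, $0 \neq a$); and $x^{a_1} \neq x^{a_2}$ whenever $a_1 \neq a_2$, since distinct powers are distinct monomials in $\mathbb{R}[x]$. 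Hence no two elements of the domain share an image, and $\delta'(.)$ is injective.

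There is no genuine obstacle here; the only point needing a moment's care is keeping the special symbols $-\infty$ and $0$ (handled by the first two cases, with images $0$ and $1$) separate from the monomial branch, and recalling that powers of $x$ of different degree really are distinct elements of the polynomial ring. In the write-up I would favour the left-inverse argument via \eqref{eq:deltaInvDelta} for brevity, mentioning the case analysis only as the underlying reason that identity holds.
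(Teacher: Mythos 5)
Your proposal is correct. Your preferred route differs slightly from the paper's: the paper proves injectivity by the direct case analysis you describe as your fallback — it assumes $\delta'(a_1)=\delta'(a_2)$ and splits on whether the common value is $0$, $1$, or a monomial $x^{a}$, reading off $a_1=a_2$ in each case from the definition \eqref{eq:InvDelta}. Your primary argument instead invokes \eqref{eq:deltaInvDelta}, i.e.\ that $\delta(.)$ is a left inverse of $\delta'(.)$, and concludes injectivity in one line by applying $\delta(.)$ to both sides. Both are sound: the left-inverse argument is shorter and appeals to a general set-theoretic fact, but it leans on \eqref{eq:deltaInvDelta}, which the paper only asserts as an observation (its verification is, in effect, the same case-by-case inspection); the paper's proof is self-contained at the level of the definition and makes the three cases explicit. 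In substance the two come to the same thing, with your version trading a little self-containedness for brevity.
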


\begin{proof}
Let $\delta '(a_1)=\delta '(a_2)$, for any $a_1,a_2\in \mathbb{Z}^{+\text{ve}} _{\max }$.

\noindent 
If $\delta '(a_1)=\delta '(a_2) =0$, then $a_1=a_2=-\infty$, by definition.

 \noindent If $\delta '(a_1)=\delta '(a_2) =1$, then $a_1=a_2=0$, by definition.

 \noindent Otherwise, $\delta '(a_1)=\delta' (a_2)$ $\Rightarrow x^ {a_1} =x^{a_2}$ $\Rightarrow a_1=a_2$.
\end{proof}

The next theorem is the main result of this section, which allows us to describe computations in $(\mathbb{Z}^{+\text{ve}} _{\max}, \max, +)$ semi-rings in terms of sums and products of polynomials. 

\begin{theorem}
\label{t:MaxPluPoly}
The following equality is true:
\begin{equation}
\label{eq:Thm} 
\begin{split}
\max \{a_1+b_1, a_2+b_2 \cdots a_k+b_k\} \;
= \; \delta(\sum _{i=i} ^{m} p_{a_i} \cdot p_{b_i})
\end{split}
\end{equation}
where,
\begin{enumerate}[i.]
    \item $m \in \mathbb{N}$ and $m\geq 2$
    \item $a_i, b_i \in \mathbb{Z}^{+\text{ve}} _{\max}$, and
    \item $p_a=\delta '(a)$, for $a\in \mathbb{Z}^{+\text{ve}} _{\max}$
\end{enumerate}
\end{theorem}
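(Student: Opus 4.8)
The plan is to evaluate the right-hand side directly by pushing the degree function through the sum and through each product, using that $\delta(\cdot)$ is a semi-ring homomorphism from $(\mathbb{P},+,\cdot)$ onto $(\mathbb{Z}^{+\text{ve}}_{\max},\max,+)$, as established in Proposition~\ref{p:DegHomomorphism}. Since $p_{a_i}=\delta'(a_i)$ and $p_{b_i}=\delta'(b_i)$ lie in $\mathbb{P}$, and $\mathbb{P}$ is closed under addition and multiplication (Proposition~\ref{p:Psemiring}), the argument $\sum_{i=1}^{m} p_{a_i}\cdot p_{b_i}$ is a well-defined element of $\mathbb{P}$ to which $\delta$ applies; throughout we read $k=m$.

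First I would extend the additive homomorphism identity from binary to finite arity by a routine induction on $m$: from $\delta(p+q)=\max\{\delta(p),\delta(q)\}$ one obtains, for any $q_1,\dots,q_m\in\mathbb{P}$,
\[
\delta\Bigl(\sum_{i=1}^{m} q_i\Bigr)=\max\{\delta(q_1),\dots,\delta(q_m)\},
\]
the base case $m=2$ being exactly Proposition~\ref{p:DegHomomorphism}. Applying this with $q_i=p_{a_i}\cdot p_{b_i}$ gives $\delta\bigl(\sum_{i=1}^{m} p_{a_i}\cdot p_{b_i}\bigr)=\max\{\delta(p_{a_1}\cdot p_{b_1}),\dots,\delta(p_{a_m}\cdot p_{b_m})\}$.

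Next I would simplify each term. By the multiplicative part of the homomorphism, $\delta(p_{a_i}\cdot p_{b_i})=\delta(p_{a_i})+\delta(p_{b_i})$; and by \eqref{eq:deltaInvDelta} together with $p_a=\delta'(a)$, we have $\delta(p_{a_i})=\delta(\delta'(a_i))=a_i$ and $\delta(p_{b_i})=b_i$. Hence each term equals $a_i+b_i$, and substituting back yields
\[
\delta\Bigl(\sum_{i=1}^{m} p_{a_i}\cdot p_{b_i}\Bigr)=\max\{a_1+b_1,\dots,a_m+b_m\},
\]
which is the claimed identity.

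There is no deep obstacle here; the work is bookkeeping. The only point requiring a little care is the absorbing element $-\infty$: if some $a_i$ or $b_i$ equals $-\infty$, then $p_{a_i}$ or $p_{b_i}$ is the zero polynomial, the $i$-th product vanishes in $\mathbb{P}$, and on the max-plus side $a_i+b_i=-\infty$ does not affect the maximum. This consistency is already built into Proposition~\ref{p:DegHomomorphism} and into the identities used above, so no separate case analysis is needed. I would also note that the hypothesis $m\ge 2$ merely ensures the left-hand side is a genuine maximum of at least two terms and that the binary base case is available; the identity holds trivially for $m=1$ as well.
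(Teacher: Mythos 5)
Your proof is correct and takes essentially the same route as the paper: both rest on the semi-ring homomorphism $\delta$ of Proposition~\ref{p:DegHomomorphism} together with the identity $\delta(\delta'(a))=a$ from \eqref{eq:deltaInvDelta}, and both handle the finite sum by induction on $m$. The only difference is organizational --- you isolate the induction in the $n$-ary additive identity $\delta\bigl(\sum_{i} q_i\bigr)=\max_i \delta(q_i)$ and then substitute $q_i=p_{a_i}\cdot p_{b_i}$, whereas the paper inducts on the full mixed expression directly --- a cosmetic rearrangement of the same argument.
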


\begin{proof}
We prove the theorem using the principle of mathematical induction. 
We make use of the semi-ring structures of $(\mathbb{Z}^{+\text{ve}} _{\max}, \max, +)$ and $(\mathbb{P}, +, \cdot)$ (see, Proposition \ref{p:Zsemiring} and Proposition \ref{p:Psemiring}), especially the associativity of operators (see, Definition \ref{Def:Semiring}) and the homomorphism, $\delta(.)$, between the semi-rings (Proposition \ref{p:DegHomomorphism}).

For $m=2$, we have,
\begin{equation*}
    \begin{split}
    \lefteqn{\max \{a_1+b_1, a_2+b_2\}}\\
    & =  \max\{\delta (\delta ' (a_1)) +\delta (\delta ' (b_1)), \delta (\delta ' (a_2)) +\delta (\delta ' (b_2)) \} \\
    & = \max \{\delta(p_{a_1})+\delta(p_{b_1}),\delta(p_{a_2})+\delta(p_{b_2})\} \\
    & =\max \{\delta (p_{a_1} \cdot p_{b_1}) , \delta (p_{a_2} \cdot p_{b_2}) \}\\
    &= \delta ((p_{a_1} \cdot p_{b_1} )+( p_{a_2} \cdot p_{b_2}))
    \end{split}
\end{equation*}
Thus, Equation \eqref{eq:Thm} holds for $m=2$. 

Let the statement in Equation \eqref{eq:Thm} hold for $m=k$, $k\in \mathbb{N}, k\geq2$.

We have,
\begin{equation*}
    \max \{ a_1 +b_1, a_2+b_2, \cdots, a_k+b_k\} =\delta(\sum _{i=1}^{k} p_{a_i}\cdot p_{b_i})
\end{equation*}

Let $a_0=\max  \{ a_1 +b_1, a_2+b_2, \cdots, a_k+b_k\}$.  Similarly, let $p_0 =\sum _{i=1}^{k} p_{a_i}\cdot p_{b_i}$. Clearly, due to closure of operators in semi-rings, $a_0 \in \mathbb{Z}^{+\text{ve}} _{\max}$ and $p_0\in \mathbb{P}$.

We now show that Equation \eqref{eq:Thm} holds for $m=k+1$.
For any $a_{k+1},b_{k+1} \in \mathbb{Z}^{+\text{ve}} _{\max}$, we have 
\begin{equation*}
    \begin{split}
        \lefteqn{\max \{a_1 +b_1, a_2+b_2, \cdots, a_k+b_k, a_{k+1}+b_{k+1}\}}\\
        & =  \max \{\max\{a_1 +b_1, a_2+b_2, \cdots, a_k+b_k\}, a_{k+1}+b_{k+1}\}\\
        & = \max \{a_0, a_{k+1}+b_{k+1} \}\\
        & = \max \{\delta (\delta '(a_0)), \delta(\delta'(a_{k+1}))+\delta(\delta '(b_{k+1})) \} \text{, by \eqref{eq:deltaInvDelta}}\\
        & = \delta(p_{a_0}+(p_{a_{k+1}}\cdot p_{b_{k+1}}))
    \end{split}
\end{equation*}

We have $\delta (p_0)=a_0=\delta(\delta '(a_0))=\delta (p_{a_0})$. Thus, we get:
\begin{equation*}  
\begin{split}
    \lefteqn{\delta(p_{a_0}+(p_{a_{k+1}}\cdot p_{b_{k+1}}))} \\
    &= \max\{\delta(p_{a_0}), \delta(p_{a_{k+1}}\cdot p_{b_{k+1}}) \} \\
    &=\max\{\delta(p_0), \delta(p_{a_{k+1}}\cdot p_{b_{k+1}}) \} \\
    &= \max \{\delta (\sum _{i=1}^{k} p_{a_i}\cdot p_{b_i}),  \delta(p_{a_{k+1}}\cdot p_{b_{k+1}})\}\\
    &=\delta((\sum _{i=1}^{k} p_{a_i}\cdot p_{b_i}) + (p_{a_{k+1}}\cdot p_{b_{k+1}}))\\
    &= \delta(\sum _{i=1}^{k+1} p_{a_i}\cdot p_{b_i})
\end{split}
\end{equation*}

i.e.
\begin{equation*}
    \begin{split}
         \max \{a_1 +b_1, a_2+b_2, \cdots, a_k+b_k, a_{k+1}+b_{k+1}\} \\
         =\delta(\sum _{i=1}^{k+1} p_{a_i}\cdot p_{b_i}) 
    \end{split}
\end{equation*}

Here, we have shown that,  Equation \eqref{eq:Thm}  holds for $m=k+1$, if it holds for $m=k$.  
Thus, Equation \eqref{eq:Thm} holds for $m\in \mathbb{N}$, $m\geq 2$. 
\end{proof}

\subsection{Discrete Linear Convolutions}
\label{Sec:Convolutions}
In this section we briefly discuss some basic notions and properties of linear discrete convolution which are utilized in our proposed method. 
The content of this section is extracted from part of \cite{SB-2022-1},
focusing on the techniques that are relevant for the extensions in 
the current paper.

\subsubsection*{One-Dimensional Discrete Linear Convolution}
Consider two 1-dimensional discrete signals $f:F\rightarrow \mathbb{R}$ and 
$g:G\rightarrow \mathbb{R}$, where $F,G\subseteq \mathbb{Z}$. The convolution of $f$ and $g$, 
denoted by $f\circledast g$, results in a 1-dimensional discrete signal 
$h:\mathbb{Z}\rightarrow \mathbb{R}$, computed by:
\begin{equation}
\label{Equation:1}
\begin{split}
    h[k] &= (f\circledast g)[k] =(g\circledast f)[k] \\
    &= \sum _{i=-\infty} ^{\infty} f[i]g[k-i] \text{  ,} \quad \forall k \in \mathbb{Z}
\end{split}
\end{equation}
In \eqref{Equation:1}, $f$ and $g$ are sufficiently 
\textit{padded} with $0$s, i.e, $f[i]=0$, if $i\not\in F$, and  $g[i]=0$, if $i\not\in G$.

If $F$ and $G$ are finite sub-intervals of $\mathbb{Z}$, we might be interested in the 
output $h=(f\circledast g)$, only over a finite subset $H\subseteq \mathbb{Z}$. 
This subset is determined by the \textit{mode} of convolution \cite{SciPy_Convolve}.
In this work, we utilize the \textit{full mode} of convolution, in which the output $h_{full} = (f\circledast _{full} g)$ omits all the elements of the linear discrete convolution whose computation only involves \textit{padded} parts of the inputs:

\begin{equation}
    \label{Equation:2}
    \begin{split}
    h_{full}[k] &= (f\circledast _{full}g)[k]\\
    &=\sum _{i \, : \, i\in F \land (k-i)\in G} f[i]g[k-i]
    \end{split}
\end{equation}

%

\begin{example}
\label{example:1}
Let us clarify the meaning of the notions from above by giving explicit formulae for 
corresponding convolutions of two signals of finite lengths $n_1+1$ respectively $n_2+1$. 
Consider two signals $f:[0,n_1]\rightarrow \mathbb{R}$ and $g:[0,n_2]\rightarrow \mathbb{R}$. 
The linear discrete convolution of the two signals is given by 
\begin{equation}
\label{Equation:3}
    h[k]=(f\circledast g)[k]= 
    \begin{cases}
    \displaystyle{\sum _{i=\max\{0,k-n_2\}} ^{\min\{n_1,k\}}} f[i]g[k-i] \\
    \qquad \text{if } 0\leq k\leq n_1+n_2 \vspace{1.0ex}\\
    0 \quad \text{ otherwise}
    \end{cases}
\end{equation}
The \textit{full mode} of convolution is given by
\begin{equation}
    \label{Equation:4}
    \begin{split}
    (f\circledast_{full}g)[k]=\sum_{i=\max\{0,k-n_2\}}^{\min \{n_1,k\}} f[i]g[k-i] \text{, } \\
     \forall k\in [0,n_1+n_2] 
    \end{split}
\end{equation}
\end{example}

\begin{example}
\label{e:4}
Let us consider another example of full convolution, involving discrete $1$-dimensional signals with negative indices. Let $f_1:[-1, 7]\rightarrow \mathbb{R}$ and   $g_1:[-3,5]\rightarrow \mathbb{R}$.
\begin{equation}
    \label{Equation:e4}
    \begin{split}
    (f_ 1\circledast_{full}g_1)[k]=\sum_{i=\max\{-1,k-5\}}^{\min \{7,k+3\}} f_1[i]g_1[k-i] \text{, } \\
     \forall k\in [-4,12] 
    \end{split}
\end{equation}
\end{example}

\subsubsection*{Multi-Dimensional Linear Discrete Convolution}
Multi-dimensional linear discrete convolution is a straightforward extension of 
$1$-dimensional convolution. Let $\tilde{f}:\tilde{F}\rightarrow R$ and 
$\tilde{g}:\tilde{G}\rightarrow R$ be $N$-dimensional discrete signals, 
i.e. $\tilde{F}, \tilde{G} \subseteq \mathbb{Z}^N$. 
Setting $\theta_j := k_j-i_j$,
the convolution 
$(\tilde{f}\circledast \tilde{g})=\tilde{h}:\mathbb{Z}^N\rightarrow \mathbb{R}$ is defined as:
\begin{equation}
\label{Equation:MultiDimensionalConv}
\begin{split}
    \tilde{h}[k_1,k_2,\ldots, k_N] 
    &= (\tilde{f}\circledast \tilde{g})[k_1,k_2,\ldots, k_N] \\ 
     = \sum _{i_1=-\infty} ^{\infty} & \ldots  \sum _{i_N=-\infty} ^{\infty}  \tilde{f}[i_1,\ldots, i_N]\tilde{g}[\theta_1,\ldots, \theta_N]
\end{split}
\end{equation}
valid for all $k_1, k_2, \ldots k_N \in \mathbb{Z}$ within all the $\theta_j$, and 
where $\tilde{f}$ and $\tilde{g}$ are sufficiently \textit{padded}.

If $\tilde{F}$ and $\tilde{G}$ are finite and $N$-dimensional rectangles, i.e. $\tilde{F}=F_1\times F_2\times \ldots F_N$, $\tilde{G}=G_1\times G_2\times \ldots G_N$,
where all the $F_1,\ldots, F_N,G_1,\ldots, G_N$ are sub-intervals of $\mathbb{Z}$, the finite domain of interest of output  $\tilde{H}\subseteq \mathbb{Z}^N$  is specified by the \textit{mode} of convolution along each dimension, similar to $1$-dimensional case. 

\section{Proposed Method}
\label{Sec:Method}
Let us first sketch the general proceeding of our novel method.
We consider in a general setting an $N$-dimensional grey-value 
image $f:F\rightarrow \mathbb{L}$ and filter 
(flat or non-flat) $b:B\rightarrow \mathbb{L}$, $F,B\subseteq \mathbb{Z}^N$.
First we construct $(N+1)$-dimensional arrays, $f_{Um}$ and $b_{Um}$, 
for $f$ and $b$ respectively, such that, we have a $1$-dimensional vector which represents the coefficients of the polynomial ($\delta '(f(x))$ or $\delta ' (b(x))$) corresponding to each pixel ($x)$. The constructions $f_{Um}$ and $b_{Um}$ are similar to the classic notion of the umbra of an image, cf.~\cite{Haralick_1},  to which we make a reference with the subscript.

By Theorem \ref{t:MaxPluPoly} (and Equation \eqref{eq:Thm}), 
the constructed arrays allow us to transfer the problem of morphological dilation 
(computing maxima of sum)
in $N$ dimensions to that of convolution in $(N+1)$ dimensions. 
There we can use an efficient method such as the FFT to compute the 
convolution, $(f\oplus b)_{Um}$, of $f_{Um}$ and $b_{Um}$. 
The dilated image, $(f\oplus b)$, can be obtained by appropriately 
projecting the $(N+1)$-dimensional array $(f\oplus b)_{Um}$ on $N$-dimensions, i.e.\ by computing $\delta(p_x)$ at each $x\in F$, where $p_x$ is the polynomial whose coefficients constitute $(f\oplus b)_{Um}(x)$.

The detailed description of our proposed method is given as follows.

\subsubsection*{Step 1}\label{Step:1}
Let $l_R=\max_{x\in F}\{f(x)\} + \max _{x\in B} \{ b(x)\}$. 
Construct two $(N+1)$-dimensional arrays  $f_{Um}$ and $b_{Um}$. 
The first $N$ dimensions, referred to as the \textit{domain-dimensions}, 
of $f_{Um}$ and $b_{Um}$ consists of all ($N$-dimensional) indices of $F$ and $B$, respectively. 
The last dimension, referred to as the \textit{range-dimension} consists of 
indices $\{0,1,\ldots l_R\}$. 

The \textit{range-dimension}, $f_{Um}(x)$ (or $b_{Um}(x)$), represents the coefficients of polynomial $p_x=\delta '(f(x))$ (or $p_x=\delta '(b(x))$) corresponding to the pixel value of image (or filter) at position $x$. The values in $f_{Um}$ and $b_{Um}$ are determined 
by the following two equations:
\begin{equation}
\label{Equation:7}
    f_{Um}(x,y)=\begin{cases}
    1 &\text{if } x\in F \text{ and } f(x)=y\\
    0 &\text{otherwise.}
    \end{cases}
\end{equation}
\begin{equation}
\label{Equation:8}
    b_{Um}(x,y)=\begin{cases}
    1 &\text{if } x\in B \text{ and } b(x)=y\\
    0 &\text{otherwise.}
    \end{cases}
\end{equation}
Note that the indices of \textit{range-dimension} start from $0$. 
The above construction makes it possible to have image and filter of any shape and including \textit{gaps} in the domain. 

We pad or fill the domain appropriately so that $F$ and $B$ are (hyper-) rectangles. For example, if $x_0 \not \in B$, then, we can define $b(x_0)=-\infty$, i.e.\ $\delta '(b(x_0))=\delta '(0)=0$, thus,  $b_{Um}(x_0,y)=0$, $\forall y\in \{0,1,\ldots l_R\}$. 
Thus, the constructed arrays $f_{Um}$ and $b_{Um}$ will always be $(N+1)$-dimensional (hyper-) rectangles in shape, regardless of the shape of image domain $F$ and filter domain $B$.

\begin{example}
\label{Example:3}
(Adopted from \cite{SB-2022-1}.)
Consider the image $f$ and filter $b$ in Example \ref{Example:2}.  
Then $f_{Um}$ and $b_{Um}$ are $2$-dimensional arrays (meaning that we have matrices here), 
with the column (\textit{range dimension}) having indices $0,1,\ldots 9$, since $f$ has the range of values in $[0,7]$ and $b$ in $[0,2]$.
Since the image $f$ in Example \ref{Example:2} has six pixels, $f_{Um}$ has six corresponding columns.

\begin{center}

$f_{Um}=\begin{bmatrix}
0 &0 &0 &0 &0 &0 \\
0 &0 &0 &0 &0 &0 \\
0 &0 &1 &0 &0 &1 \\
0 &0 &0 &1 &0 &0 \\
0 &0 &0 &0 &0 &0 \\
0 &0 &0 &0 &0 &0 \\
1 &0 &0 &0 &0 &0 \\
0 &0 &0 &0 &1 &0 \\
0 &0 &0 &0 &0 &0 \\
\underline{0} &1 &0 &0 &0 &0 
\end{bmatrix}$
$b_{Um}=\begin{bmatrix}
0 &0 &0 &0 \\
0 &0 &0 &0 \\
0 &0 &0 &0 \\
0 &0 &0 &0 \\
0 &0 &0 &0 \\
0 &0 &0 &0 \\
0 &0 &0 &0 \\
0 &1 &0 &0 \\
1 &0 &0 &0 \\
0 &\underline{0} &0 &1 
\end{bmatrix}$
\end{center}
Let us note that we write the column entries having in mind the umbra notion, which means that
numbering is from left to right (as in $f$) and from bottom to top 
(so that row number $k$ from the bottom corresponds to the grey value $k$, 
with the added range for holding the possible filtering results when using a non-flat structuring element as by $b$ in the example).

\end{example}

\subsubsection*{Step 2}\label{Step:2}
We calculate $(f\oplus b)_{Um}$ by taking the linear convolution of $f_{Um}$ and $b_{Um}$, 
by using \textit{full mode} on the \textit{domain dimensions} and the \textit{range dimension}. The working of convolution of umbras is explained in the next subsection (see, Subsection \ref{Sec:UmbraConv}) 

Within our implementation, this step is sped up using Fast Fourier Transform (FFT).

\begin{example}
\label{Example:4}
Continuing Example \ref{Example:3}, we demonstrate the 
computation of $(f\oplus b)_{Um}(2)$.
We have, from Equation \eqref{Eq:Obs3}: 
\begin{eqnarray}
\lefteqn{(f\oplus b)_{Um}(2)} \nonumber\\
& = &
f_{Um}(0)\circledast_{full}b_{Um}(2)+
f_{Um}(1)\circledast_{full}b_{Um}(1)+
\nonumber\\
&&
f_{Um}(2)\circledast_{full}b_{Um}(0)+
f_{Um}(3)\circledast_{full}b_{Um}(-1)
\nonumber
\end{eqnarray}
so that
\begin{center}
$(f\oplus b)_{Um}(2)=$
$\begin{bmatrix} 
0\\
0\\
0\\
0\\
0\\
0\\
1\\
0\\
0\\
0
\end{bmatrix}$
$+ \begin{bmatrix} 
0\\
0\\
0\\
0\\
0\\
0\\
0\\
0\\
0\\
0
\end{bmatrix}$
$+ \begin{bmatrix} 
1\\
0\\
0\\
0\\
0\\
0\\
0\\
0\\
0\\
0
\end{bmatrix}$
$+ \begin{bmatrix} 
0\\
0\\
1\\
0\\
0\\
0\\
0\\
0\\
0\\
0
\end{bmatrix}$
$= \begin{bmatrix} 
1\\
0\\
1\\
0\\
0\\
0\\
1\\
0\\
0\\
0
\end{bmatrix}$
\end{center}
\end{example}

\subsubsection*{Step 3}\label{Step:3}
$(f\oplus b)$ is determined from $(f\oplus b)_{Um}$ for each $x\in F$, using the following equation
\begin{equation}
    \label{Equation:9}
    (f\oplus b)(x) =\begin{cases}
    \max\{ y \, \vert \, (f\oplus b)_{Um} (x,y) \geq 1 \} &\text{if } x\in F\oplus B\\
    0 &\text{otherwise}
    \end{cases}
\end{equation}
where, clearly, 
\begin{eqnarray}
x\not\in F\oplus B = \lbrace x \, \vert \, x-x_b\in F \, \text{ for some } \, x_b \in B \rbrace  \nonumber\\
 \Leftrightarrow 
\not\exists y : (f\oplus b)_{Um} (x,y) \geq 1
\nonumber
\end{eqnarray}

Equation \eqref{Equation:9}, in essence, computes $\delta(p_x)$, for each $x\in F$ where $p_x$ is the polynomial whose coefficients constitute the \textit{range dimension} $(f\oplus b)_{Um}(x)$. 

Thus $(f\oplus b)$ calculated by \eqref{Equation:9} is the same as defined in \eqref{Equation:6}. 

\begin{example}
\label{Example:5}
Continuing Example \ref{Example:4}, after \textbf{Step 2}, we have, 
\begin{center}
    $(f\oplus b)_{Um}=\begin{bmatrix}
0 &0 &1 &0 &0 &1 \\
0 &1 &0 &1 &1 &0 \\
0 &0 &1 &0 &1 &0 \\
0 &0 &0 &0 &0 &1 \\
1 &0 &0 &0 &0 &0 \\
0 &0 &0 &0 &1 &0 \\
0 &0 &1 &1 &0 &0 \\
0 &1 &0 &0 &0 &0 \\
1 &0 &0 &0 &0 &0 \\
\underline{0} &0 &0 &1 &0 &0 
\end{bmatrix}$
\end{center}
(where we consequently find $(f\oplus b)_{Um}(2)$ in the third column).
Using \eqref{Equation:9}, we easily obtain $(f\oplus b)$ from the matrix above by 
taking the highest row index in which the value 1 appears in each column, 
starting from bottom row equivalent to grey value zero.
Thus we get $(f\oplus b)$ $=\begin{bmatrix} \underline{5} &8&9&8&8&9\end{bmatrix}$, 
compare Example \ref{Example:2}. 

This demonstrates how the proposed method can be used to compute the \emph{exact} dilation of an image by a non-flat filter. 

\end{example}


\subsection{Convolution of Umbras}
\label{Sec:UmbraConv}

In this subsection, we justify why the convolution 
performed in \textbf{Step 2} of our method produces the desired result.  

\begin{enumerate}[i.]
    \item Products of polynomials corresponds to \textit{full mode} of convolution of it's coefficients \cite{coremen}. 
    
    Let $p,q \in \mathbb{P}$, $p(x)=\sum _{j=0}^n f_j x^j$ , $q(x)=\sum _{j=0}^m g_j x^j$.
     
    Let $f$ and $g$ be $1$-dimensional signals on $[0,n]$ and $[0,m]$ respectively, with $f[j]=f_j$, $\forall j\in [0,n]$ and $g[j]=g_j$, $\forall j\in [o,m]$. Then, we get
    
    \begin{equation}
    \label{Eq:Obs1}
        \begin{split}
            c(x) = p(x)q(x) &= \sum _{j=0}^{m+n} h_j x^j\\
            \text{where, } h_j &=\sum_{k=\max\{0,j-m\}}^{\min \{n,j\}} f[k]g[j-k]
        \end{split}
    \end{equation}
    That is,  coefficients of $c$ are given by the $1$-dimensional signal $h$, defined on $[0,m+n]$, where $h=f\circledast _{full} g$ (compare, Equations \eqref{Eq:Obs1} and \eqref{Equation:4}).
    
\item  We now focus on the convolution of \textit{umbras}, $f_{Um}$ and $b_{Um}$.  Note that, by construction (in \textbf{Step 1}), $F$ and $B$ are finite $N$-dimensional (hyper-)rectangles and $L$ is the finite set $\{0,1,\cdots l_R\}$. Therefore, the set of indices, $F\times L$ and $B \times L$ are \emph{finite} $(N+1)$-dimensional (hyper-)rectangles. Thus, we can write 

\begingroup
\allowdisplaybreaks
\begin{align}
    \lefteqn{(f_{Um}\circledast b_{Um}) [k_1,k_2\cdots k_{N+1}]}  \nonumber \\
    &= \sum _{i_1=-\infty}^{\infty} \cdots \sum _{i_{N+1}=-\infty}^{\infty} f_{Um}[i_1,i_2 \cdots i_{N+1}] \nonumber \\
    & \hspace{25ex} b_{Um}[\theta _1, \theta _2 \cdots \theta_{N+1}] \nonumber \\
    &=  \sum _{i_1=-\infty}^{\infty}\cdots \sum _{i_N=-\infty}^{\infty} \bigl\{  \sum _{i_{N+1}=-\infty}^{\infty} f_{Um}[i_1,\cdots i_{N+1}] \nonumber \\
    &\hspace{32.5ex} b_{Um}[\theta _1, \cdots \theta_{N+1}] \bigr\} \nonumber  \\
    &=  \sum _{i_1=-\infty}^{\infty}\cdots \sum _{i_N=-\infty}^{\infty} \bigl\{  (f_{Um}[i_1,\cdots i_{N},:]\circledast \nonumber  \\
     & \hspace{25ex} b_{Um}[\theta _1, \cdots \theta_{N},:])(k_{N+1}) \bigr\} \label{Eq:Obs3_0}
\end{align}

\endgroup

Here, $\theta _j = k_j-i_j$ and $f_{Um} [k_1, \ldots, k_N,:]$ and $b_{Um}[\theta_1,\ldots,\theta_{N},:]=b_{Um} [k_1-i_1,\ldots, k_{N}-i_{N},:]$ are $1$-dimensional signals, with index of every dimension, except the last, fixed.

Computing \eqref{Eq:Obs3_0} for all indices of the last dimension (\textit{range} dimension) we obtain,

\begin{equation}
    \label{Eq:Obs3}
    \hspace{-2.5ex}
    \begin{split}
      \lefteqn{(f_{Um}\circledast b_{Um}) [k_1,\cdots, k_N,:]} \\
       & =\sum _{i_1=-\infty}^{\infty}\cdots \sum _{i_N=-\infty}^{\infty} \bigl\{  (f_{Um}[i_1,\cdots i_{N},:]\circledast b_{Um}[\theta _1, \cdots \theta_{N},:]) \bigr\} 
    \end{split}
\end{equation}

\item Taking \textit{full mode} of convolution along each dimension in Equation \eqref{Eq:Obs3}, we obtain
\begin{equation}
 \label{Eq:Obs4}
    \begin{split}
         \lefteqn{(f_{Um}\circledast_{full} b_{Um}) [k_1,\cdots, k_N,:]}  \\
       & =\sum _{i_1\in F_1 \land \theta_1 \in B_1}\cdots \sum _{i_N\in F_N \land \theta_N \in B_N} \bigl\{  (f_{Um}[i_1,\cdots i_{N},:]\circledast _{full}\\
       &\hspace{35.0ex} b_{Um}[\theta _1, \cdots \theta_{N},:]) \bigr\} 
    \end{split}
\end{equation}

We have 
\begin{equation*}
\begin{split}
\lefteqn{\{i_j\vert i_j\in F_j \land \theta_j \in B_j\}}\\
&=\{i_j\vert i_j\in F_j \land k_j-i_j \in B_j\}\\
&=\{k_j-i_j\vert k_j-i_j\in F_j \land i_j \in B_j\}\\
&=\{\theta _j\vert \theta_j\in F_j \land i_j \in B_j\}
\end{split}
\end{equation*}
Thus, we can rewrite the Equation \eqref{Eq:Obs4} as

\begin{equation}
 \label{Eq:Obs5}
    \begin{split}
        \lefteqn{(f_{Um}\circledast_{full} b_{Um}) [k_1,\cdots, k_N,:]} \\
       & =\sum _{\theta_1\in F_1 \land i_1 \in B_1}\cdots \sum _{\theta_N\in F_N \land i_N \in B_N} \bigl\{  (f_{Um}[\theta_1,\cdots \theta_{N},:]\\
       & \hspace{30ex} \circledast _{full}b_{Um}[i_1, \cdots i_{N},:]) \bigr\} 
    \end{split}
\end{equation}

$F_j$ and $B_j$ are closed sub-intervals of  $\mathbb{Z}$, let $F_j=[a_1,a_2]$ and $B_j=[b_1,b_2]$. We show that the set of indices $i_j$, and thus set of indices $\theta _j=k_j-i_j$, satisfying $\theta_j\in F_j \land i_j \in B_j$ form a closed sub-interval of $\mathbb{Z}$.

\begin{align*}
    i_j \in B_j &\Rightarrow b_1\leq i_j\leq b_2\\
    k_j-i_j\in F_j &\Rightarrow a_1\leq k_j-i_j\leq a_2 \\
    &\Rightarrow k_j-a_2\leq i_j \leq k_j-a_1\\
    \therefore \max\{k_j-a_2, b_1 \} \leq & i_j \leq \min\{k_j-a_1,b_2 \}
\end{align*}
Consider a fixed $x=(k_1, k_2 \cdots k_N) \in \mathbb{Z}^N$. Let $I_j=\{i_j\vert \theta_j\in F_j \land i_j \in B_j\}$ and $\Theta _j=\{\theta_j \vert \theta_j\in F_j \land i_j \in B_j\}$, for $j=1,2,\cdots N$. $I_j$s and $\Theta _j$s are closed sub-intervals of $\mathbb{Z}$ and thus, $\prod _{j=1}^{N} I_j$ and  $\prod _{j=1}^{N} \Theta_j$ are $N$-dimensional (hyper-)rectangles.

Moreover, if $y=(i_1,i_2, \cdots, i_N)\in \prod _{j=1}^{N} I_j$, then, clearly, \\
    $x-y=(k_1-i_1, k_2-i_2, \cdots, k_N-i_N)\in \prod _{j=1}^{N} \Theta_j$
\\
Since, $F=\prod _{j=i} ^{N} F_j$ and $B=\prod _{j=i} ^{N} B_j$, we get 

\begin{equation}
\label{Eq:Obs6}
    \begin{split}
    \lefteqn{\{(x-y,y)\vert (x-y)\in F \land y\in B\}} \\
    &= \{(x-y,y)\vert (x-y)\in \prod _{j=1}^{N} \Theta_j \land y\in \prod _{j=1}^{N} I_j\}
    \end{split}
\end{equation}

From Equations \eqref{Eq:Obs5} and \eqref{Eq:Obs6}, we get

\begin{equation}
\label{Eq:Obs7}
    \hspace{-5.0ex}
    \begin{split}
    \lefteqn{(f_{Um}\circledast_{full} b_{Um}) (x)} \\
    &= (f_{Um}\circledast_{full} b_{Um}) [k_1,\cdots, k_N,:]  \\
    &= \sum _{\theta_1\in F_1 \land i_1 \in B_1}\cdots \sum _{\theta_N\in F_N \land i_N \in B_N} \bigl\{ (f_{Um}[\theta_1,\cdots \theta_{N},:]\circledast _{full}\\
       & \hspace{35ex} b_{Um}[i_1, \cdots i_{N},:]) \bigr\} \\
    &=\sum _{ (x-y)\in \prod _{j=1}^{N} \Theta_j \land y\in \prod _{j=1}^{N} I_j}  f_{Um}(x-y) \circledast _{full} b_{Um} (y)\\
    &=\sum _{ (x-y)\in F \land y\in B}  f_{Um}(x-y) \circledast _{full} b_{Um} (y)
  \end{split} 
\end{equation}
\end{enumerate}
Clearly, the above equation holds for every $x\in \mathbb{Z}^N$ such that $x-y\in F$ for some $y\in B$, i.e.\ for every $x\in F\oplus B$.
Compare Equation \eqref{Eq:Obs7} and the formula for dilation, Equation \eqref{Equation:6new}. In \eqref{Eq:Obs7}, to compute $(f_{Um}\circledast_{full} b_{Um})$ at $x$, we take sum of product of polynomials ($f_{Um}(x-y) \circledast _{full} b_{Um} (y)$) at exactly those pair of indices ($\{(x-y,y)\vert (x-y)\in F \land y\in B\} $)  on which the maximum of sum of pixel values $f(x-y)+b(y)$ is computed for finding value of dilated image $(f\oplus b)$ at $x$.

Thus, our constructions in \textbf{Step 1} and convolution in \textbf{Step 2} allows us to effectively apply Theorem \ref{t:MaxPluPoly} to compute dilation of images.

\subsection{Time Complexity}
\label{Sec:Time}

Let $n_i$ be the size of the image and $n_f$ be the size of the filter and the range of the grey-values be $[0,n_r-1]$. \textbf{Step 1} takes $\mathcal{O}(2n_in_r+2n_fn_r)$. 
For convolution using FFT 
in \textbf{Step 2}, it takes $\mathcal{O}(2n_in_r\log(2n_in_r))+\mathcal{O}2n_in_f\log(2n_in_f))$. \textbf{Step 3} takes  $\mathcal{O}(2n_in_r)$. In total, the time complexity is $\mathcal{O}(2n_in_r\log(2n_in_r)+2n_in_f\log(2n_in_f))$. 

In practice, we have $n_f\leq n_i$ and $n_r$ is a small constant, say $c$ ($c=256$ and $c=16$, in case of $8$-bit and $4$-bit grey-value images, respectively). Therefore, the overall time complexity is $\mathcal{O}(4cn_i\log(cn_i))$ $=\mathcal{O}(n_i\log(n_i))$. 

 For the classical dilation, when we use non-flat filter with no restrictions, the time complexity is $\mathcal{O}(n_f n_i)$. As $n_f\rightarrow n_i$, we get $\mathcal{O}(n_f n_i)\rightarrow \mathcal{O}(n_i ^2)$. Therefore, for large images with relatively large filter size (w.r.t. image size) or small range of pixel values (e.g., $[0,15]$ in $4$-bit image or $[0,255]$ in $8$-bit image), our proposed method is more suitable. The experiments in the following sections confirm this.

\section{Experimental Evaluation}
\label{Sec:Experiments}

In this section we first discuss the computational performance of the proposed
method in comparison to several alternative methods for dilation/erosion. 
After that we demonstrate the qualitative properties of our new exact method in comparison
with the computationally related approach based on analytic fast approximation
introduced in \cite{VMB_1}.

\subsection{Quantitative Performance Evaluation}

For our discussion, let the image be of size $n_i$ and filter of size $n_f$, given again
in terms of the number of corresponding pixels, with $n_f \leq n_i$.
Let us briefly describe all the algorithms that we compare. 

We consider the Proposed method (capitalized here for better identification)
for $4$-bit and $8$-bit tonal range, where the latter
corresponds to standard grey value range. 
Though the asymptotic time complexity, 
$\mathcal{O}(n_i\log n_i)$, remains the same for higher rates, the tonal range of image and filter 
has a significant effect on the run time of our method. For computations within our scheme, we 
have used \textsf{fft.fftn()} and \textsf{fft.ifftn()} from NumPy package \cite{NumPy}, 
for FFT and inverse FFT respectively. 
All computations are performed on a  modern workstation (Intel\textsuperscript{\textregistered}  Xeon \textsuperscript{\textregistered}  W-2125 Processor, Fedora Linux 36 (64-bit), 64GB RAM). 

The first method for comparison is a naive implementation of classical dilation
(denoted here as Classical), i.e. computing 
by sliding the filter over each pixel. 
Furthermore, we employ SciPy routine \textsf{ndimage.grey\_dilation()} from 
SciPy package \cite{SciPy} for comparison. This is a highly efficient implementation of 
the histogram sliding window described in \cite{Droogenbroeck-Talbot-1996}, using the 
approach described in \cite{Douglas-96} to compute the $\min$ and $\max$.
The run-time of naive implementation and SciPy method is independent of the tonal range 
of the image. Therefore, we only test the run-time for $8$-bit tonal range. 

Note that the worst case time complexity of naive implementation and SciPy method with 
non-flat filter, even if optimised in implementation, is theoretically still $O(n_f \times n_i)$.

\begin{figure}[h]
\centering
\begin{tikzpicture}[scale=0.95]
\begin{axis}[
    title={Dependency on Filter Size},
    xlabel={Filter Size [$n_f=n_1\times n_1$]},
    ylabel={Time [in seconds]},
    xmin=512, xmax=65536,
    ymin=-2.5, ymax=28,
    xtick={},
    ytick={},
    legend pos=north west,
    ymajorgrids=true,
    grid style=dashed,
]

\addplot[
    color=blue,
    mark=square,
    ]
    coordinates {
    (1024,0.7563)(4096,0.7599)(9216,0.7601)(16384,0.7579)(25600,0.7548)(36864,0.7610)(50176,0.7566)(65536,0.7555)
    };
    \addlegendentry{Proposed method $4$-bit}

\addplot[
    color=blue,
    mark=triangle,
    ]
    coordinates {
    (1024,12.706)(4096,12.7066)(9216,12.7072)(16384,12.7053)(25600,12.7104)(36864,12.7077)(50176,12.7093)(65536,12.7064)
    };
    \addlegendentry{Proposed method $8$-bit}
 
\addplot[
    color=red,
    mark=square,
    ]
    coordinates {
    (1024,0.2433)(4096,1.0522)(9216,2.6518)(16384,5.5421)(25600,10.528)(36864,15.1803)(50176,20.6720)(65536,27.0016)
    };
    \addlegendentry{SciPy Dilation}
 
\addplot[
    color=black,
    mark=triangle,
    ]
    coordinates {
    (1024,0.4199)(4096,1.5696)(9216,3.4621)(16384,5.8188)(25600,8.9975)(36864,12.9564)(50176,17.6351)(65536,23.0336)
    };
    \addlegendentry{Classical Dilation}

\end{axis}
\end{tikzpicture}
\begin{tikzpicture}[scale=0.95]
\begin{axis}[
    title={Dependency on Image Size},
    xlabel={Image Size [$n_i=n\times n$]},
    ylabel={Time [in seconds]},
    xmin=16384, xmax=16777216,
    ymin=-200, ymax=4500,
    xtick={},
    ytick={},
    legend pos=north west,
    ymajorgrids=true,
    grid style=dashed,
]

\addplot[
    color=blue,
    mark=square,
    ]
    coordinates {
    (16384,0.428)(65536,0.192)(262144,0.795)(1048576,3.361)(4194304,15.765)(16777216,65.47)
     };
    \addlegendentry{Proposed method $4$-bit}

\addplot[
    color=blue,
    mark=triangle,
    ]
    coordinates {
    (16384,0.709)(65536,3.027)(262144,13.176)(1048576,55.687)(4194304,261.21)(16777216,1084.8)
    };
    \addlegendentry{Proposed method $8$-bit}
  
\addplot[
    color=red,
    mark=square,
    ]
    coordinates {
    (16384,0.0035)(65536,0.0385)(262144,0.6532)(1048576,10.1979)(4194304,186.46)(16777216,3031.36)
    };
    \addlegendentry{SciPy Dilation}
 
\addplot[
    color=black,
    mark=triangle,
    ]
    coordinates {
    (16384,0.0039)(65536,0.0674)(262144,1.0383)(1048576,16.085)(4194304,273.0129)(16777216,4368.21)
    };
    \addlegendentry{Classical Dilation}

\end{axis}
\end{tikzpicture}
\caption{\label{figure-complexity}
Evaluation of algorithmic time complexity of our method.
{\bf Top:} Varying filter sizes on an image of size $512\times512$. 
Let us note that the lower axis is given in factors of $10^4$. 
{\bf Bottom:} Varying image sizes $n_i=n \times n$, with filter size 
$n_f = \lfloor\frac{n}{10}\rfloor\times\lfloor\frac{n}{10}\rfloor$. 
Let us note that the lower axis is given in factors of $10^7$. }
\end{figure}
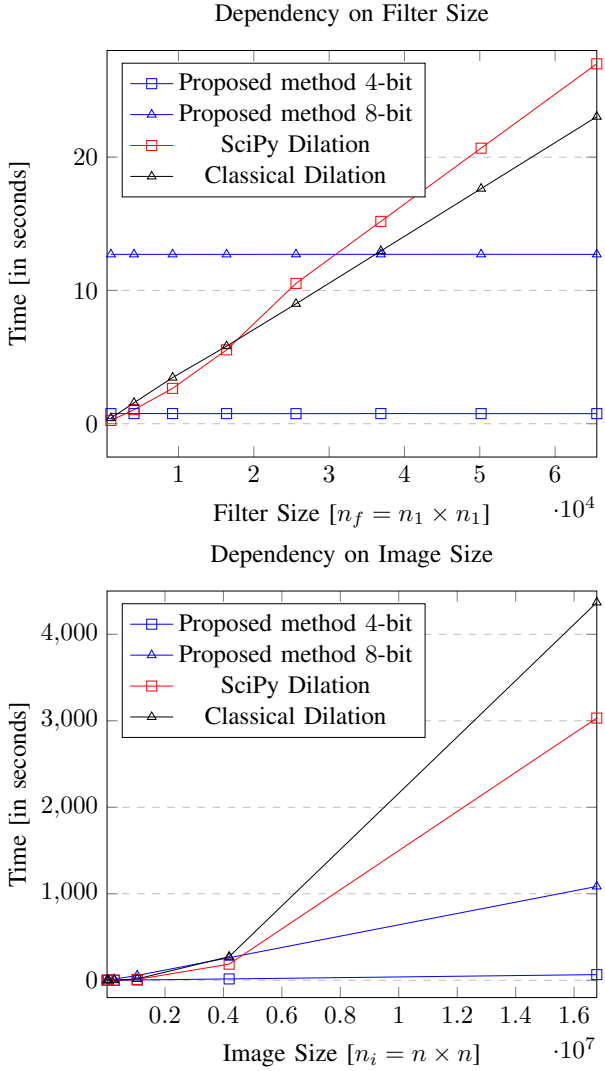

\begin{table*}
  \caption{FPGA Resource Utilization}
  \label{tab:FPGA1}
  \centering
  \begin{tabular}{@{}|c|c||c|c|c|c|c|@{}}
    \hline
     Tonal Range & FFT Size & BRAM & DSP & FF & LUT & URAM \\
    \hline
    \multirow{3}{*}{2-bit tonal range} & 32 & 112 (5\%) & 10 (~0\%) & 29636 (1\%) & 29366 (3\%)& 2 (~0\%)\\
                                 & 64 & 68 (3\%) & 10 (~0\%) & 33900 (1\%) & 33246 (3\%)&50 (10\%)\\
                                 & 128 & 116 (5\%) & 10 (~0\%) & 38736 (2\%) & 44075 (4\%)&50 (10\%)\\
                                 & 256 & 392 (20\%) & 10 (~0\%) & 39772 (2\%) & 42415 (4\%)&50 (10\%)\\
                                 & 512 & 650 (33\%) & 10 (~0\%) & 43572 (2\%) & 50655 (5\%)&50 (10\%)\\
                                 & 1024 & 1194 (61\%) & 10 (~0\%) & 47382 (2\%) & 61894 (6\%)&50 (10\%)\\
    \hline
    \multirow{3}{*}{3-bit tonal range} & 32 & 208 (10\%) & 10 (~0\%) & 31817 (1\%) & 35821 (3\%) & 2 (~0\%)\\
                                 & 64 & 116 (5\%) & 10 (~0\%) & 36137 (2\%) & 38645 (4\%) & 98 (21\%)\\
                                 & 128 & 212 (10\%) & 10 (~0\%) & 41077 (2\%) & 49928 (5\%) & 98 (21\%)\\
                                 & 256 & 488 (25\%) & 10 (~0\%) & 42217 (2\%) & 48324 (5\%) & 98 (21\%)\\
                                 & 512 & 746 (38\%) & 10 (~0\%) & 46121 (2\%) & 57126 (6\%) & 98 (21\%)\\
                                 & 1024 & 1290 (66\%) & 10 (~0\%) & 50050 (2\%) & 68711 (7\%) & 98 (21\%)\\
    \hline
    \multirow{3}{*}{4-bit tonal range} & 32 & 400 (20\%) & 10 (~0\%) & 39752 (2\%) & 52866 (5\%) & 2 (~0\%)\\
                                 & 64 & 212 (10\%) & 10 (~0\%) & 44195 (2\%) & 53578 (5\%) & 194 (41\%)\\
                                 & 128 & 404 (20\%) & 10 (~0\%) & 49332 (2\%) & 65769 (7\%) & 194 (41\%)\\
                                 & 256 & 680 (35\%) & 10 (~0\%) & 50693 (2\%) & 64845 (7\%) & 194 (41\%)\\
                                 & 512 & 938 (48\%) & 10 (~0\%) & 54806 (3\%) & 74219 (8\%) & 194 (41\%)\\
                                 & 1024 & 1482 (76\%) & 10 (~0\%) & 58929 (3\%) & 86470 (9\%) & 194 (41\%)\\
    \hline
    \multirow{3}{*}{5-bit tonal range} & 32 & 788 (40\%) & 10 (~0\%) & 51967 (2\%) & 82938 (9\%) & 2 (~0\%)\\
                                 & 64 & 408 (21\%) & 10 (~0\%) & 56623 (3\%) & 79420 (8\%) & 386(83\%)\\
                                 & 128 & 792 (40\%) & 10 (~0\%) & 62187 (3\%) & 93431 (10\%) & 386(83\%)\\
                                 & 256 & 1068 (55\%) & 10 (~0\%) & 63964 (3\%) & 93857 (10\%) & 386(83\%)\\
                                 & 512 & 1326 (68\%) & 10 (~0\%) & 68493 (3\%) & 104387 (11\%) & 386(83\%)\\
                                 & 1024 & 1870 (96\%) & 10 (~0\%) & 73032 (4\%) & 117996 (13\%) & 386(83\%)\\
    \hline
  \end{tabular}
\end{table*}

\begin{figure}[h]
\centering
\begin{tikzpicture}[scale=0.9]
\begin{axis}[
    title={FPGA Implementation Execution Time},
    xlabel={1-D FFT Size},
    ylabel={Time [in seconds]},
    xmin=32, xmax=1024,
    ymin=-1.5, ymax=30,
    xtick={},
    ymode=log,
    ytickten={-3,-2,-1,0,1,2},
    legend pos=south east,
    ymajorgrids=true,
    grid style=dashed,
]

\addplot[
    color=blue,
    mark=square,
    ]
    coordinates {
    (32,0.00502)(64,0.01835)(128,0.06991)(256,0.271)(512,1.069)(1024,4.237)
    };
    \addlegendentry{Proposed method $2$-bit}

\addplot[
    color=red,
    mark=square,
    ]
    coordinates {
    (32,0.00909)(64,0.0329)(128,0.125)(256,0.482)(512,1.894)(1024,7.499)
    };
    \addlegendentry{Proposed method $3$-bit}
\addplot[
    color=green,
    mark=square,
    ]
    coordinates {
    (32,0.0172)(64,0.06214)(128,0.23)(256,0.904)(512,3.551)(1024,14.044)
    };
    \addlegendentry{Proposed method $4$-bit}
\addplot[
    color=black,
    mark=square,
    ]
    coordinates {
    (32,0.0326)(64,0.116)(128,0.438)(256,1.683)(512,6.6)(1024,26.086)
    };
    \addlegendentry{Proposed method $5$-bit}

\end{axis}
\end{tikzpicture}
\caption{\label{figure-fpgaexec}
Execution time of the proposed method with different tonal ranges and for different image sizes}
\end{figure}
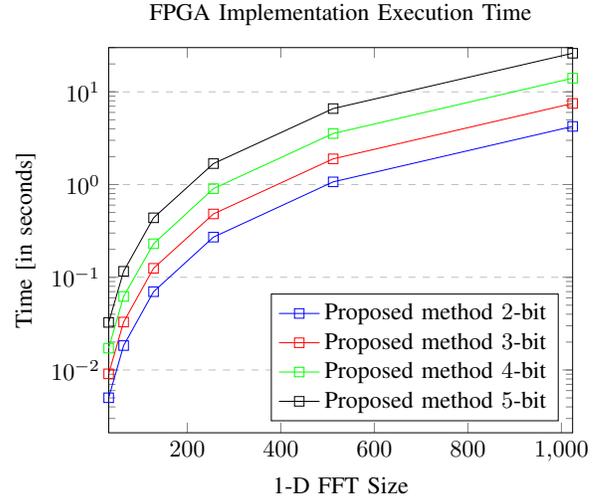

In the first experiment, see Figure \ref{figure-complexity} \textit{top}, we evaluate 
the average time taken by varying the filter size on a fixed size of image. 
The filters and images are filters generated using \textsf{numpy.random.randint()}, with 
range of values from $0$ to $255$ for $8$-bit Proposed, Classical and SciPy method 
and with range of values from $0$ to $15$ for $4$-bit Proposed method. 
The size of images are $512\times 512$. It is clearly visible that the SciPy method 
and the Classical dilation behave linearly with respect to size of filter.
For filter size $32\times 32$ they take about $0.24$ and $0.42$ seconds absolute 
computation time on our computer, respectively, up to around $25$ seconds for filter 
size $256 \times 256$. The time taken for Proposed method remains constant with the 
size of the filter, taking on average about $0.76$ seconds in $4$-bit settings and 
about $12.5$ in $8$-bit setting.

In the second experiment, Figure \ref{figure-complexity} \textit{bottom}, we evaluate 
the time taken for dilation on varying sizes of images. 
The image sizes$n_i=n\times n$ increases from $128\times 128$ to $4096\times 4096$. 
The corresponding filter sizes are $n_f= \lfloor\frac{n}{10}\rfloor\times\lfloor\frac{n}{10}\rfloor$.
The images and the filters are generated using \textsf{numpy.random.randint()}, with range of 
values from $0$ to $255$, except for $4$-bit Proposed method, where the range of values 
is $0$ to $15$. As expected, the Proposed method in $4$-bit and $8$-bit settings 
perform in $\mathcal{O}(n_i\log n_i)$ time. In $4$-bit setting, the Proposed 
method takes $0.42$ seconds for dilation of $128\times 128$ image by $12\times 12$ filter, 
and $15$ seconds for dilation of $2048\times 2048$ image by a $200\times 200$ filter. 
In $8$-bit settings, the Proposed method takes $0.7$ and $261.2$ seconds, respectively. 
We have, in the second experiment, 
$n_f=\lfloor\frac{n}{10}\rfloor\times\lfloor\frac{n}{10}\rfloor \approx	 \frac{n_i}{100}$. 
Thus the time complexity of SciPy method and naive method is $\mathcal{O}(n_in_f)=\mathcal{O}(n_i^2)$.
This is also reflected by their run-times in the second experiment.

We observe from the above experiments that our Proposed method is significantly faster 
than the other methods in the narrow tonal range, as e.g. in the $4$-bit setting. The Proposed 
method is faster than SciPy method and Classical method, also in the usual $8$-bit setting, 
if the ratio of filter size to image size is in the larger range, as seen in first experiment 
or when working on larger images, even keeping the ratio of filter size to image size 
constant, as in second experiment.

Let us note that we have not employed GPUs in the above experiments. 
It is surely worth pointing out that attempts to use GPUs to compute grey value 
morphology usually incorporates restrictions on symmetry and/or flatness of 
the filter, see discussions in \cite{Moreaud,Thurley-Danell-2012}.
However, there are several efficient implementations of FFT and inverse FFT 
on the GPU, see e.g. \cite{GPUFFT}. Therefore, our method could be sped up 
utilising the GPUs, without any restrictions on the filter, making it even 
more competitive in the large tonal range.  

\subsection{Quantitative Results of the Hardware Implementation}
In this work, we accelerated the proposed method by implementing it on hardware and observing the results. We used a modern Xilinx FPGA board, the Versal VCK190 Evaluation Board, which houses an XCVC1902-2M FPGA device, to implement a hardware representation of our method. The IP core designed for this method was written in C++ and synthesized using Xilinx Vitis HLS 2022.2. The results from the hardware implementation were compared to the Python code to ensure the functionality of the design. For our hardware tests, we chose images and filters as squares with sizes of each edge in a manner, to form padded images and filters with edge sizes that are powers of 2 to fit the FFT cores optimally. We used Xilinx FFT IP core for 1-dimensional FFTs (forward and inverse) to eventually implement 3-dimensional FFTs.

We tested our method for tonal ranges of 2-bit, 3-bit, 4-bit, and 5-bit, with the filter size set to $5\times5$, and changed the image size from $28\times28$ up to $1020\times1020$. The FPGA resource utilization summary is shown in Table \ref{tab:FPGA1} for these scenarios. It can be seen that the highest resource utilization is in memory blocks, which are used to store the contents of 2-dimensional and 3-dimensional arrays at different points of the procedure. Choosing a higher tonal range limits the largest image size that can be processed, considering the BRAM and URAM resources available on the FPGA.

Figure \ref{figure-fpgaexec} shows the execution time for the four tonal range scenarios and different FFT sizes. The FFT size addresses the size of the 1-D FFTs processing the x-axis and y-axis of the padded image and filter. The execution time data shows that, in each scenario, when the FFT size doubles, the execution time almost doubles as well, which is consistent with the results from the Python implementation of this method.
\subsection{Qualitative Comparison to Previous Fourier Approach}


The significant advantage of our method is the fact that we are able to compute the 
\emph{exact} dilation of an image of size $n_i$ by any non-flat filter of size 
$n_f\leq n_i$ in $\mathcal{O}(n_i\log n_i)$ time. 
The Fast Analytical Approximation proposed in \cite{VMB_1}, is also asymptotically 
$\mathcal{O}(n_i\log n_i)$ as the new Proposed method and, in fact, has faster 
run-times in practice. The downside of Fast Approximation method is the 
non-trivial grey-value shift, that has been studied in detail in \cite{VMB_2}, 
which comes along with a smoothing effect in the tonal histogram.


\begin{figure}[h]
\centering
\minipage{0.33\linewidth}
\centering
 \includegraphics[width=\linewidth]{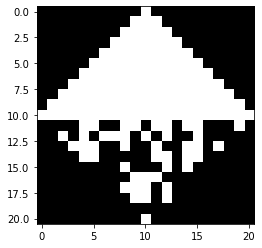}
\endminipage\hfill
\minipage{0.66\linewidth}
 \includegraphics[width=\linewidth]{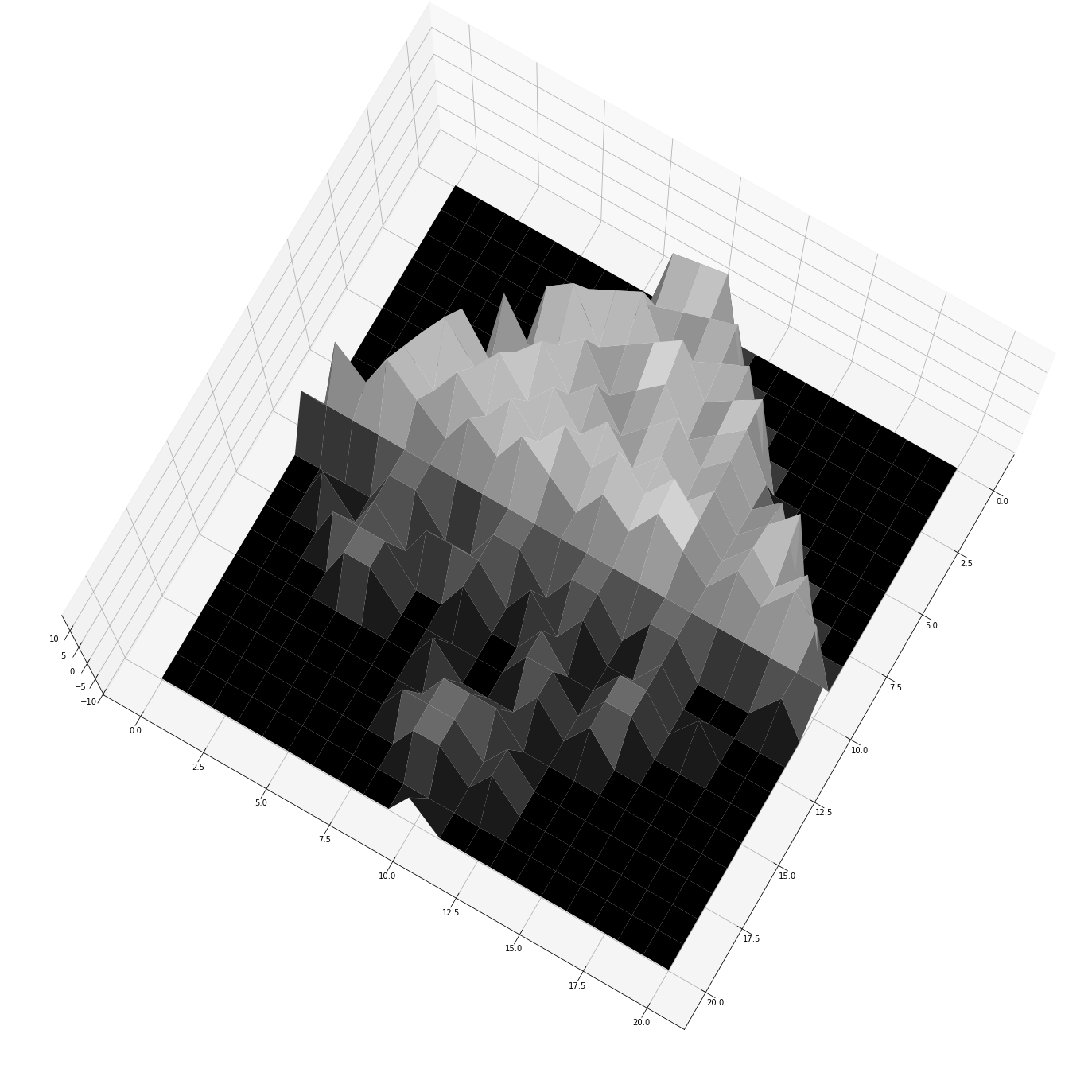}
\endminipage\hfill
\caption{\label{FilterShapeUmbra} 
{\bf: Left} The base shape of filter.  {\bf Right:} Umbra of the filter of image.
}

\end{figure}

To demonstrate the exactness of the Proposed method and its difference from 
Fast Approximation, we use a non-flat filter. 
The shape of the filter and its umbra (again after \cite{Haralick_1}) is 
shown in Figure \ref{FilterShapeUmbra}. 

We perform the demonstration of dilation quality on a $512\times 512$ image of 
\textit{Peppers}, see Figure \ref{pepper} (left). 
The Figure \ref{pepper} (right) shows the dilation of the \textit{Pepper} 
image with non-flat filter in the classical way, i.e. as described in \eqref{Equation:6}.


\begin{figure}[h]
\centering
\minipage{0.48\linewidth}
\centering
 \includegraphics[width=\linewidth]{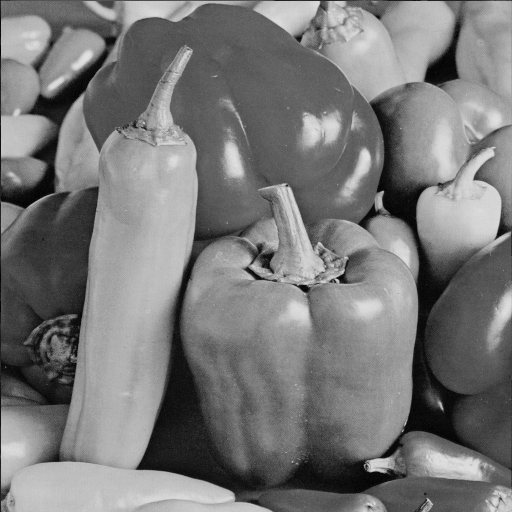}
\endminipage\hfill
\minipage{0.48\linewidth}
 \includegraphics[width=\linewidth]{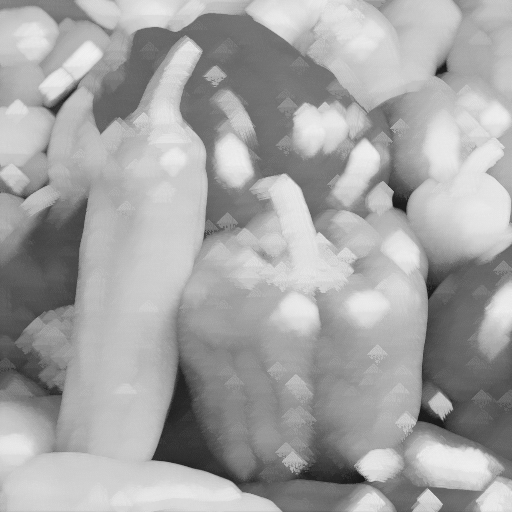}
\endminipage\hfill
\caption{\label{pepper} 
{\bf: Left:}  \textit{Pepper} image of size $512\times 512$ {\bf Right:} Classical Dilation with the non-flat filter (as described in Figure \ref{FilterShapeUmbra}.
}

\end{figure}


\begin{figure}[h]
\centering
\minipage{0.48\linewidth}
\centering
 \includegraphics[width=\linewidth]{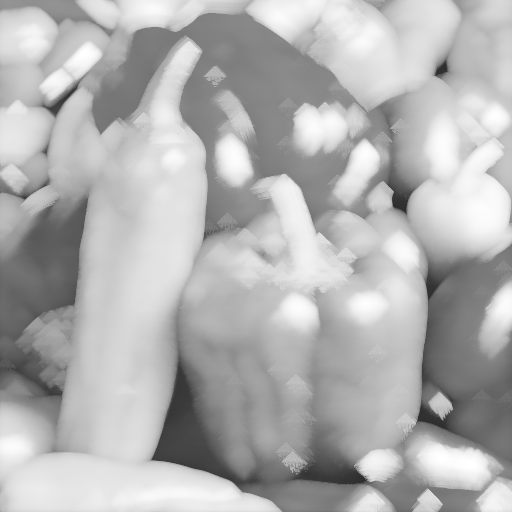}
\endminipage\hfill
\minipage{0.48\linewidth}
 \includegraphics[width=\linewidth]{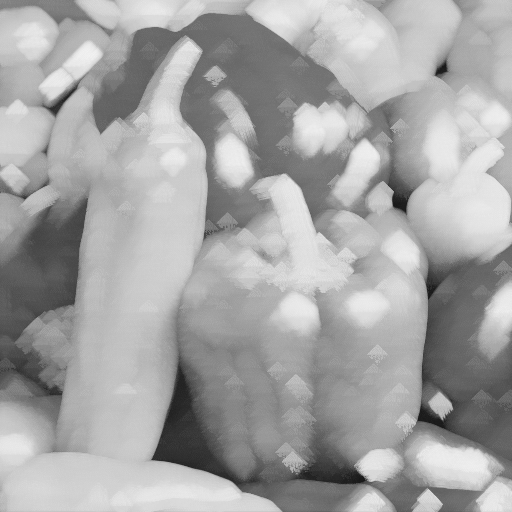}
\endminipage\hfill
\caption{\label{compDilImg} 
Dilation of  \textit{Pepper} image of size $512\times 512$ with the non-flat filter {\bf Left:} Fast Approximation. {\bf Right:} Proposed Method.
}

\end{figure}


\begin{figure}[h]
\centering
\minipage{0.48\linewidth}
\centering
 \includegraphics[width=\linewidth]{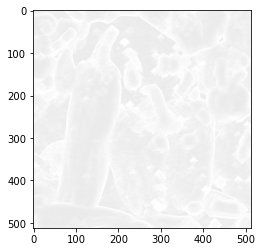}
\endminipage\hfill
\minipage{0.48\linewidth}
 \includegraphics[width=\linewidth]{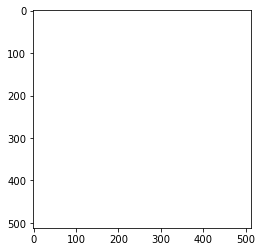}
\endminipage\hfill
\caption{\label{DiffImage} 
Negative of absolute difference (pixel-wise) with Classical Dilation of  \textit{Pepper} image with the non-flat filter {\bf Left:} Fast Approximation. {\bf Right:} Proposed Method.
}

\end{figure}

We compare the Proposed method (in $8$-bit settings) and Fast 
Approximation, see Figure \ref{compDilImg}, with the classical exact dilation.

The \emph{exactness} of the Proposed method is confirmed visually by looking 
at the images Figure \ref{pepper} (right) and  Figure \ref{compDilImg} (right). This is quantitatively confirmed, in Figure \ref{DiffImage}, by taking the absolute difference, pixel-by-pixel, of the result of classical dilation with that of fast approximation and proposed method.
It is also evident by overlaying the histograms of the images, 
see Figure \ref{HistoCompUmbra}, that there are
no artefacts or dilation accuracy degradation by the boundary treatment inside  the FFT.

The positive grey-value shift in Fast Approximation is as expected apparent in 
Figure \ref{HistoCompUmbra}. Moreover, our proposed method is also successful in 
preserving the minute details which are lost in Fast Approximation, compare 
images in Figure \ref{compDilImg}.


\begin{figure}[h]

 \includegraphics[width=\linewidth]{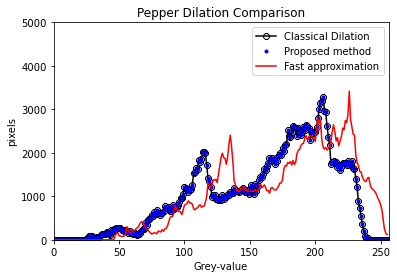}

\caption{\label{HistoCompUmbra} 
Histogram of dilated images.
}

\end{figure}

\begin{figure}[h]
\centering
\begin{tikzpicture}[scale=0.95]
\begin{axis}[
    title={Dependency on Filter Size},
    xlabel={Filter Size [$n_f=n_1\times n_1$]},
    ylabel={Average Absolute Error},
    xmin=0, xmax=320,
    ymin=-2, ymax=20,
    xtick={},
    ytick={},
    legend pos=north west,
    ymajorgrids=true,
    grid style=dashed,
]

\addplot[
    color=blue,
    mark=square,
    ]
    coordinates {
    (9,0)(25,0)(49,0)(81,0)(121,0)(169,0)(225,0)(289,0)
     };
    \addlegendentry{Proposed Method}

\addplot[
    color=red,
    mark=triangle,
    ]
    coordinates {
    (9,4.861)(25,7.66)(49,10.280)(81,12.340)(121,14.383)(169,16.008)(225,17.52)(289,18.77)
    };
    \addlegendentry{Fast Approximation}

\end{axis}
\end{tikzpicture}
\begin{tikzpicture}[scale=0.95]
\begin{axis}[
    title={Dependency on Image Size},
    xlabel={Image Size [$n_i=n\times n$]},
    ylabel={Average Absolute Error},
    xmin=5000, xmax=365000,
    ymin=-2, ymax=15,
    xtick={},
    ytick={},
    legend pos=north west,
    ymajorgrids=true,
    grid style=dashed,
]

\addplot[
    color=blue,
    mark=square,
    ]
    coordinates {
    (10000,0)(40000,0)(90000,0)(160000,0)(250000,0)(360000,0)
     };
    \addlegendentry{Proposed Method}

\addplot[
    color=red,
    mark=triangle,
    ]
    coordinates {
    (10000,7.7)(40000,7.7)(90000,7.7)(160000,7.7)(250000,7.7)(360000,7.7)
    };
    \addlegendentry{Fast Approximation}

\end{axis}
\end{tikzpicture}

\caption{\label{Fig:ErrorGraph}
Average \emph{absolute} difference in pixel value with Classical 
{\bf Top:} Varying filter sizes $n_f=n_1 \times n_1$, on an image of size $99\times 99$. 
{\bf Bottom:} Varying image sizes $n_i=n \times n$, with filter size 
$5 \times 5$. 
}
\end{figure}
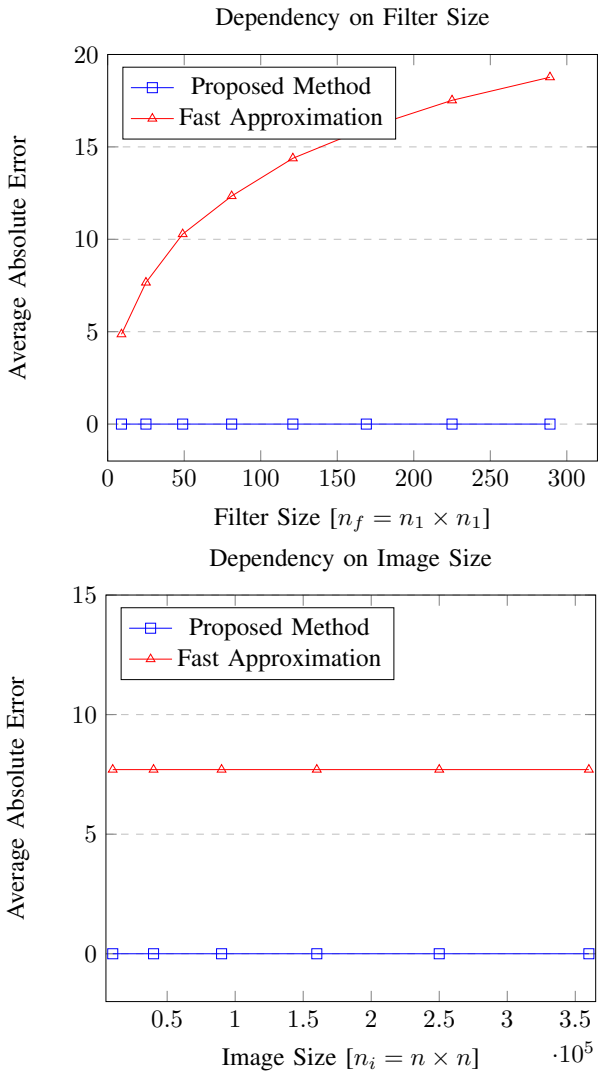

To demonstrate the exactness the proposed method, we perform two experiments, see Figure \ref{Fig:ErrorGraph}, in $8$-bit settings. We measure the average absolute difference in pixel value with classical dilation with  fast approximation and the proposed method. In the first experiment, we generate,  $100$ pair of random $99\times 99$ images and filter, for each filter sizes $3\times 3$, $5\times 5$ $\cdots$ $17\times 17$, using  \textsf{numpy.random.randint()}, with 
range of values from $0$ to $255$.Similarly, in the second experiment, we generate,  $100$ pair of random $5\times 5$ filters and images, for each image sizes $100\times 100$, $200\times 200$ $\cdots$ $600\times 600$. As expected the Proposed method has $0$ average absolute error, i.e.\ it exactly equates with classical dilation.
We can also observe that the absolute error in Fast Approximation increases logarithmically with respect to filter size.

\section{Conclusion}
\label{Sec:Conclusion}
We have proposed a novel method to \emph{exactly} compute morphological dilation of an 
image, of size $n_i$, with any arbitrary non-flat filter, of size $n_f\leq n_i$, 
in $\mathcal{O}(n_i \log n_i)$ time. As erosion is dual to dilation, it is straightforward
to compute erosion analogously.

Because our novel scheme is exact, any useful morphological filter combinations like 
opening, closing, Beucher gradient, and so on, can be easily combined in a novel and fast way.
We conjecture that this may make our algorithm a novel and useful basis for many practical
applications.

 We have established homomorphism between the max-plus semi-ring of non-negative integers and a semi-ring of polynomials set in the real field. This theory could serve as a foundation to future research relating max-plus semi-rings and plus-prod semi-rings (see e.g. \cite{Cohen-2019}, \cite{Maragos_2019}).  

Our proposed method allows us to  explore  some of well-engineered techniques developed 
for computing convolution in future work. For example, heading for real time applications in very large 
images, say $n_i\geq 10^9$, some partitioned convolution algorithm 
\cite{Wefers-2015} can be implemented. 
We may improve the run-time of our implementation by employing GPUs to calculate FFT 
and inverse FFT, see e.g. \cite{GPUFFT}, and thus speeding up \textbf{Step 2} of our method. 

The $(N+1)$-dimensional arrays $f_{Um}$ and $b_{Um}$ constructed in Step 1,
see \eqref{Equation:8} and \eqref{Equation:9}, satisfy the sparsity conditions 
mentioned in \cite{Sparse-FFT-Conv}. Therefore, the performance, especially for broader tonal ranges, may also be improved by using 
Sparse-FFT to compute the convolution in Step 2, see \cite{Sparse-FFT-Conv}.

As we have shown, the new method appears to be particularly useful for narrow tonal ranges,
but at the same time it is evident that the method may be highly efficient for 
standard tonal ranges when exploring advanced computational techniques and hardware
as mentioned. By the presented FPGA implementation we highlight 
the potential usefulness of the new method for many possible applications.

\end{document}